\setlist[description]{leftmargin=0pt,labelindent=\parindent}
\DeclarePairedDelimiter{\abs}{\lvert}{\rvert}
\newcommand*\reals{\mathds{R}}
\newcommand\Sp[2]{\reals^{d^{#1}_{#2}}}
\newcommand*\MAT{\mathsf}
\newcommand*\PR{\mathds{P}}
\newcommand*\bMAT[1]{\bar{\MAT{#1}}}
\newcommand*\brMAT[1]{\breve{\MAT{#1}}}
\newcommand*\bbMAT[1]{\bar {\bar{\MAT{#1}}}}
\DeclareMathOperator\DARE{DARE}
\DeclareMathOperator\VAR{var}
\DeclareMathOperator\TR{Tr}
\newcommand*\EXP{\mathbb{E}}
\newcommand*\TRANS{{\mathpalette\doTRANS\empty}}
\newcommand*\doTRANS[2]{\raisebox{\depth}{$\m@th#1\intercal$}}
\newcommand*\VVEC{\bm}
\newcommand*\BAR[1]{\bm{\bar{#1}}}
\DeclareMathOperator{\VEC}{vec}
\DeclareMathOperator{\DIAG}{diag}
\DeclareMathOperator{\ROWS}{rows}
\DeclareMathOperator{\COLS}{cols}
\newcommand\SEQ[1]{#1^1, \dots, #1^{|M|}}
\NewDocumentCommand\AVG{s}
    {\IfBooleanTF#1%
      {\frac{1}{\abs{N  }} \sum_{i \in N  }}
      {\frac{1}{\abs{N^m}} \sum_{i \in N^m}}
    }
\begin{document}

\raggedbottom

\twocolumn[

\aistatstitle{Thompson sampling for linear quadratic mean-field teams}

\aistatsauthor{ Mukul Gagrani \And Sagar Sudhakara \And  Aditya Mahajan \And Ashutosh Nayyar \And Yi Ouyang }

\aistatsaddress{ USC \And  USC \And Mcgill \And USC \And Preferred Networks } ]

\begin{abstract}
We consider optimal control of an unknown multi-agent linear quadratic (LQ)
system where the dynamics and the cost are coupled across the agents through
the mean-field (i.e., empirical mean) of the states and controls.
Directly using single-agent LQ learning algorithms in such models results in 
regret which increases polynomially with the number of agents. We
propose a new Thompson sampling based learning algorithm which exploits the
structure of the system model and show that the expected Bayesian regret of
our proposed algorithm for a system with agents of $|M|$ different types at
time horizon $T$ is $\tilde{\mathcal{O}} \big( |M|^{1.5} \sqrt{T} \big)$ irrespective
of the total number of agents, where the $\tilde{\mathcal{O}}$ notation hides
logarithmic factors in $T$. We present detailed numerical experiments to
illustrate the salient features of the proposed algorithm.
\end{abstract}

\section{Introduction}
Linear dynamical systems with a quadratic cost (henceforth referred to as LQ
systems) are one of the most commonly used modeling framework in robotics,
aerospace, electrical circuits, mechanical systems, thermodynamical systems,
and chemical and industrial plants. Part of the appeal of LQ models is
that the optimal control action in such models is a linear or affine function
of the state; therefore, the optimal policy is easy to identify and easy to
implement.

Broadly speaking, three classes of learning algorithms have been proposed for
LQ systems: Optimism in the face of uncertainty (OFU) based algorithms,
certainty equivalence (CE) based algorithms, and Thompson sampling (TS) based
algorithms.

OFU-based algorithms are inspired by the OFU principle for multi-armed
bandits
\citep{auer2002finite}. 
Starting with the work of~\citep{campi1998adaptive,
abbasi2011regret}, most of the papers following this
approach~\citep{faradonbeh2017finite,cohen2019learning,abeille2020efficient}
provide a high probability bound on regret. As an illustrative
example, it is shown in \citep{abeille2020efficient} that, with high
probability, the regret of a OFU-based learning algorithm is $\tilde
{\mathcal{O}}( d_x^{0.5} (d_x + d_u) \sqrt{T})$, where $d_x$ is the dimension
of the state, $d_u$ is the dimension of the controls, $T$ is the time
horizon, and the $\tilde{\mathcal{O}}(\cdot)$ notation hides logarithmic
terms in~$T$.

Certainty equivalence (CE) is a classical adaptive control algorithm in
Systems and Control
\citep{astrom1994adaptive}.  
Most papers following this approach~\citep{dean2018regret,mania2019certainty,faradonbeh2020input,simchowitz2020naive} also
provide a high probability bound on regret.
As an illustrative example, it is shown in \citep{simchowitz2020naive} that,
with high probability, the regret of a CE-based algorithm is $\tilde
{\mathcal{O}}( d_x^{0.5}d_u  \sqrt{T} + d_x^2)$.

Thompson sampling (TS) based algorithms are inspired by TS 
algorithm for multi-armed bandits 
\citep{agrawal2012analysis}. 
Most papers following this
approach~\citep{ouyang2017control,ouyang2019posterior,abeille2018improved}
establish a bound on the expected Bayesian regret. As an illustrative
example, \citep{ouyang2019posterior} shows that the regret of a TS-based
algorithm is $\tilde{\mathcal{O}}(d_x^{0.5}(d_x + d_u) \sqrt{T})$. 

Two aspects of these regret bounds are important: the dependence on the time
horizon $T$ and the dependence on the dimensions $(d_x, d_u)$ of the state and
the controls. For all classes of algorithms mentioned above, the dependence on the time horizon is
$\tilde{\mathcal{O}}(\sqrt{T})$. Moreover, there are multiple papers which
show that, under different assumptions, the regret is lower bounded by 
$\Omega(\sqrt{T})$ \citep{cassel2020logarithmic,simchowitz2020naive}. So, the time
dependence in the available regret bounds is nearly order optimal. Similarly, even
though the dependence of the regret bound on the
dimensions of the state and the control varies slightly for each class
of algorithms, \cite{simchowitz2020naive} recently showed that the regret is
lower bounded by $\tilde{\Omega}(d_x^{0.5} d_u
\sqrt{T})$. So, there is only a
small scope of improvement in the dimension dependence in the regret bounds. 
{\parfillskip=0pt \emergencystretch=.5\textwidth \par}

The dependence of the regret bounds on the dimensions of the state and controls
is critical for applications such as formation control of robotic swarms and
demand response in power grids which have large numbers of agents (which can be of
the order of $10^3$ to $10^5$). In such systems, the effective
dimension of the state and the controls is $nd_x$ and $nd_u$, where $n$ is
the number of agents and $d_x$ and $d_u$ are the dimensions of the state and
controls of each agent. Therefore, if we take the regret bound of, say, 
the OFU algorithm proposed in \cite{abeille2020efficient}, the regret is
$\tilde{\mathcal{O}}( n^{1.5} d_x^{0.5} ({d_x + d_u}) \sqrt{T})$. Similar
scaling with $n$ holds for CE- and TS-based algorithms. The polynomial
dependence on the number of agents is prohibitive and, because of it, the
standard regret bounds are of limited value for large-scale systems.

There are many papers in the planning literature on the design of large-scale
systems which exploit some structural property of the system to develop
low-complexity design
algorithms~\citep{lunze1986dynamics,sundareshan1991qualitative,yang1995structural,hamilton2012patterned,arabneydi2015mft,arabneydi2016mft}.
However, there has been very little investigation on the role of such structural
properties in developing and analyzing learning algorithms.

Our main contribution is to show that by carefully exploiting the structure of
the model, it is possible to design learning algorithms for large-scale LQ
systems where the regret does not grow polynomially in the number of agents.
In particular, we investigate mean-field coupled control systems, which have
gained considerable importance in the last 10--15 years~\citep{LasryLions_2007,
HuangCainesMalhame_2007, HuangCainesMalhame_2012,
WeintraubBenkardRoy_2005,WeintraubBenkardVanRoy_2008}. There is a large
literature on different variations of such models and we refer the reader
to~\cite{gomes2014mean} for a survey. There has been
considerable interest in reinforcement learning for such
models~\citep{yang2018,subramanian2019reinforcement,tiwari2019reinforcement,
guo2019learning, subramanian2020multi, zhang2020reinforcement}, but
all of these papers focus on identifying asymptotically optimal policies and do
not characterize regret.

Our main contribution is to design a TS-based algorithm for mean-field teams
(which is a specific mean-field model proposed in~\cite{arabneydi2015mft,
arabneydi2016mft}) and show that (for a system with homogeneous agents) the
regret scales as $\tilde{\mathcal{O}}(|M|^{1.5}d_x^{0.5} (d_x + d_u)
\sqrt{T})$, where $|M|$ is the number of types.

We would like to highlight that although we focus on a TS-based algorithm in the
paper, it will be clear from the derivation that it is possible to develop
OFU- and CE-based algorithms with similar regret bounds. Thus, the main
takeaway message of our paper is that there is significant value in developing
learning algorithms which exploit the structure of the model.

\section{Background on mean-field teams}
\subsection{Mean-field teams model}
We start by describing a slight generalization of the basic model of
mean-field teams proposed in~\cite{arabneydi2015mft,
arabneydi2016mft}.

Consider a system with a large population of agents. The agents are
heterogeneous and have multiple types. 
Let $M = \{1, \dots, \abs{M}\}$ denote the set
of types of agents, $N^m$, $m \in M$, denote the set of all agents of
type~$m$, and $N = \bigcup_{m \in M} N^m$ denote the set of all agents.

\paragraph{States, actions, and their mean-fields:}
Agents of the same type have the same state and action spaces. In particular,
the state and control action of agents of type~$m$ take values in $\Sp{m}{x}$ and
$\Sp{m}{u}$, respectively. For any generic agent $i \in N^m$ of type~$m$, we
use $x^i_t \in \Sp{m}{x}$ and $u^i_t \in \Sp{m}{u}$ to denote its state and
control action at time~$t$. We use $\VVEC x_t = \VEC( (x^i_t)_{i \in
N})$ and $\VVEC u_t = \VEC( (u^i_t)_{i \in N})$ to denote the global state and
control actions of the system at time~$t$.

The empirical mean-field $(\bar x^m_t, \bar u^m_t)$ of agents of type~$m$, $m
\in M$, is defined as the empirical mean of the states and actions of all
agents of that type, i.e., 
\[
  \bar x^m_t =  \AVG x^i_t
  \quad\text{and}\quad
  \bar u^m_t = \AVG u^i_t.
\]

The empirical mean-field $(\BAR x_t, \BAR u_t)$ of the entire population is given
by
\[
  \BAR x_t = \VEC(\SEQ{\bar x_t})
  \quad\text{and}\quad
  \BAR u_t = \VEC(\SEQ{\bar u_t}).
\]

As an example, consider the temperature control of a multi-storied 
office building. In this case, $N$ represents the set
of rooms, $M$ represents the set of floors, $N^m$ represents all rooms in 
floor~$m$, $x^i_t$ represents the temperature in room~$i$, $\bar x^m_t$ represents
the average temperature in floor~$m$, and $\BAR x_t$ represents the collection of
average temperature in each floor. Similarly, $u^i_t$ represents the heat
exchanged by the air-conditioner in room~$i$, $\bar u^m_t$ represents the
average heat exchanged by the air-conditioners in floor~$m$, and $\BAR u_t$ represents
the collection of average heat exchanged in each floor of the building.
{\parfillskip=0pt \emergencystretch=.5\textwidth \par}

\paragraph{System dynamics and per-step cost:}

The system starts at a random initial state $x_1 = (x^i_1)_{i \in N}$, whose
components are independent across agents. For agent $i$ of type~$m$,
the initial state $x^i_1 \sim \mathcal N(0, \MAT X^i_1)$, and at time~$t \ge
1$, the state evolves according to 
\begin{equation}\label{eq:dynamics}
  x^i_{t+1} = \MAT A^m x^i_t + \MAT B^m u^i_t + 
  \MAT D^m \BAR x_t + \MAT E^m \BAR u_t + 
  w^i_t + v^m_t + \MAT F^m v^0_t,
\end{equation}
where $\MAT A^m$, $\MAT B^m$, $\MAT D^m$, $\MAT E^m$, $\MAT F^m$
are matrices of appropriate dimensions, $\{w^i_t\}_{t \ge 1}$,
$\{v^m_t\}_{t \ge 1}$, and $\{v^0_t\}_{t \ge 1}$ are i.i.d.\@ zero-mean Gaussian
processes which are independent of each other and the initial state. In particular, $w^i_t \in
\Sp{m}{x}$, $v^m_t \in \Sp{m}{x}$, and $v^0_t \in \Sp{0}{v}$, and $w^i_t \sim
\mathcal N(0, \MAT W^i)$, $v^m_t \sim \mathcal N(0, \MAT V^m)$, and $v^0_t
\sim \mathcal N(0, \MAT V^0)$. 

Eq.~\eqref{eq:dynamics} implies that all agents of type~$m$ have similar
dynamical couplings. The next state of agent~$i$ of type~$m$ depends on its
current local state and control action, the current mean-field of the states and
control actions of the system, and is influenced by three independent noise
processes: a local noise process $\{w^i_t\}_{t \ge 1}$, a noise process
$\{v^m_t\}_{t \ge 1}$ which is common to all agents of type~$m$, and a global
noise process~$\{v^0_t\}_{t \ge 1}$ which is common to all agents. 

At each time-step, the system incurs a quadratic cost $c(\VVEC x_t, \VVEC u_t)$ 
given by
\begin{multline}\label{eq:per-step-cost}
  c(\VVEC x_t, \VVEC u_t) = \BAR x_t^\TRANS \bMAT Q \BAR x_t 
  + \BAR u_t^\TRANS \bMAT R \BAR u_t \\
  + \sum_{m \in M} \AVG\bigl[ (x^i_t)^\TRANS \MAT Q^m x^i_t 
  + (u^i_t)^\TRANS \MAT R^m u^i_t \bigr].
\end{multline}
Thus, there is a weak coupling in the cost of the agents through the
mean-field. 

\paragraph{Admissible policies and performance criterion:}

There is a system operator who has access to the states of all
agents and control actions and chooses the control action according to a
deterministic or randomized policy
\begin{equation}
  \VVEC u_t = \pi_t(\VVEC x_{1:t}, \VVEC u_{1:t-1}).
\end{equation}

Let $\VVEC \theta = (\theta^m)_{m \in M}$, where $(\theta^m)^\TRANS = 
[\MAT A^m, \MAT B^m, \MAT D^m, \MAT E^m, \MAT F^m]$, denotes the
parameters of the system dynamics. The performance of any policy $\pi =
(\pi_1, \pi_2, \dots)$ is given by
\begin{equation} \label{eq:cost}
  J(\pi; \VVEC \theta) = \limsup_{T \to \infty} \frac {1}{T}
    \EXP\Bigl[ \sum_{t=1}^{T} c(\VVEC x_t, \VVEC u_t)  \Bigr].
\end{equation}
Let $J(\VVEC \theta)$ to denote the minimum of $J(\pi; \VVEC \theta)$ over all
policies.

We are interested in the setup where the system dynamics $\VVEC \theta$ are
unknown and there is a prior $p$ on $\VVEC \theta$. 
The Bayesian \emph{regret} of a policy $\pi$ operating for a
horizon $T$ is defined as 
\begin{equation}
  R(T; \pi) \coloneqq
  \EXP^\pi \biggl[ \sum_{t=1}^T c(\VVEC x_t, \VVEC u_t) 
  - T J(\VVEC \theta) \biggr]
\end{equation}
where the expectation is with respect to the prior on $\theta$, the noise
processes, the initial
conditions, and the potential randomizations done by the policy~$\pi$.

\subsection{Planning solution for mean-field teams} \label{sec:planning}
In this section, we summarize the planning solution of mean-field teams
presented in~\cite{arabneydi2015mft, arabneydi2016mft} for a known system
model.

Define the following matrices:
\begin{align*}
  \bMAT A &= \DIAG(\SEQ{\MAT A}) + \ROWS(\SEQ{\MAT D}), \\
  \bMAT B &= \DIAG(\SEQ{\MAT B}) + \ROWS(\SEQ{\MAT E}),
\end{align*}
and let 
$\bbMAT Q = \DIAG(\SEQ{\MAT Q}) + \bMAT Q$ and 
$\bbMAT R = \DIAG(\SEQ{\MAT R}) + \bMAT R$.

It is assumed that the system satisfies the following:
\vspace*{-\baselineskip}
\begin{description}
  \item[(A1)] $\bbMAT Q > 0$ and $\bbMAT R > 0$. Moreover, for every $m \in
    M$, $\MAT Q^m > 0$ and $\MAT R^m > 0$.
  \item[(A2)] The system $(\bMAT A, \bMAT B)$ is stabilizable.\footnote{System
      matrices $(\MAT A,\MAT B)$ are said to be stabilizable if there exists a gain
      matrix $\MAT L$ such that all eigenvalues of $\MAT A + \MAT B\MAT L$ are strictly inside
    the unit circle.} Moreover, for every $m \in M$, the system $(\MAT A^m,
    \MAT B^m)$ is stabilizable.
\end{description}

Now, consider the following $|M|+1$ discrete time algebraic Riccati equations
(DARE):\footnote{For stabilizable $(\MAT A, \MAT B)$ and $\MAT Q > 0$, $\DARE(\MAT A,\MAT
  B,\MAT Q,\MAT R)$ is the unique positive semidefinite solution of
\(
  \MAT S = \MAT A^\TRANS \MAT S \MAT A - (\MAT A^\TRANS \MAT S \MAT B)
  (\MAT R + \MAT B^\TRANS \MAT S \MAT B)^{-1}(\MAT A^\TRANS \MAT S \MAT B)
  + \MAT Q.
\)}
\begin{subequations}\label{eq:DARE}
\begin{align}
  \brMAT S^m &= \DARE(\MAT A^m, \MAT B^m \MAT Q^m, \MAT R^m), 
  \quad m \in M, \\
  \bMAT S &= \DARE(\bMAT A, \bMAT B, \bbMAT Q, \bbMAT R).
\end{align}
\end{subequations}
Moreover, define
\begin{subequations}\label{eq:gains}
\begin{align}
  \brMAT L^m &=
  -\bigl( (\MAT B^m)^\TRANS \brMAT S^m \MAT B^m + \MAT R^m \bigr)^{-1}
  (\MAT B^m)^\TRANS \brMAT S^m \MAT A^m, 
  \quad m \in M,\\
  \bMAT L &=
  -\bigl( \bMAT B^\TRANS \bMAT S \bMAT B + \bbMAT R \bigr)^{-1}
  \bMAT B^\TRANS \bMAT S \bMAT A,
\end{align}
\end{subequations}
and let $\ROWS(\SEQ{\bMAT L}) = \bMAT L$. 

Finally, define $\bar w^m_t = \AVG w^i_t$, $\BAR w_t = \VEC(\SEQ{\bar
w_t})$ and $\BAR v_t = \VEC(\SEQ{v_t})$. Let
\(
  \breve {\MAT W}^m = \AVG \VAR(w^i_t - \bar w^m_t)
\)
and
\(
  \bar {\MAT W} = \VAR(\BAR w_t) + \DIAG(\MAT V^1, \dots, \MAT V^{|M|}) + 
  \DIAG(\MAT F^1 \MAT V^0, \dots, \MAT F^{|M|} \MAT V^0).
\)
Note that since the noise processes are i.i.d., these covariances do not
depend on time. 

Now, split the state $x^i_t$ of agent~$i$ of type~$m$ into two parts: the
\emph{mean-field} state $\bar x^m_t$ and the \emph{relative} state $\breve
x^i_t = x^i_t - \bar x^m_t$. Do a similar split of the controls: $u^i_t = \bar
u^m_t + \breve u^i_t$. Since $\sum_{i \in N^m} \breve x^i_t = 0$ and $\sum_{i
\in N^m} \breve u^i_t = 0$, the per-step
cost~\eqref{eq:per-step-cost} can be written as
\begin{equation}\label{eq:cost-split}
  c(\VVEC x_t, \VVEC u_t) = \bar c(\BAR x_t, \BAR u_t) + 
  \sum_{m \in M} \AVG \breve c^m(\breve x^i_t, \breve u^i_t)
\end{equation}
where $\bar c(\BAR x_t, \BAR u_t) = \BAR x_t^\TRANS \bbMAT Q \BAR x_t + 
\BAR u_t^\TRANS \bbMAT R \BAR u_t$ and $\breve c^m(\breve x^i_t, \breve u^i_t)
= (\breve x^i_t)^\TRANS \MAT Q^m \breve x^i_t + (\breve u^i_t)^\TRANS \MAT R^m
\breve u^i_t$. Moreover, the dynamics of mean-field and the relative
components of the state are:
\begin{equation}\label{eq:mf-dynamics}
  \BAR x_{t+1} = \bMAT A \BAR x_t + \bMAT B \BAR u_t + \BAR w_t + \BAR v_t + \bMAT F
  v^0_t
\end{equation}
where $\bMAT F = \DIAG(\MAT F^1, \dots, \MAT F^{|M|})$ and for any
agent~$i$ of type~$m$,
\begin{equation}\label{eq:rel-dynamics}
  \breve x^i_t = \MAT A^m_t \breve x^i_t + \MAT B^m \breve u^i_t + \breve
  w^i_t,
\end{equation}
where $\breve w^i_t = w^i_t - \bar w^m_t$.

The result below follows from \cite[Theorem 6]{arabneydi2016mft}:
\begin{theorem}\label{thm:planning}
  Under assumptions \textup{(A1)} and \textup{(A2)}, the optimal policy for
  minimizing the cost~\eqref{eq:cost} is 
  given by
  \begin{equation}\label{eq:optimal}
    u^i_t = \brMAT L^m \breve x^i_t + \bMAT L^m \BAR x_t.
  \end{equation}
  Furthermore, the optimal performance is given by
  \begin{equation} \label{eq:performance}
    J(\bm \theta) = \sum_{m \in M} \TR( \breve {\MAT W}^m \brMAT S^m )
    + \TR(\bar {\MAT W} \bMAT S).
  \end{equation}
\end{theorem}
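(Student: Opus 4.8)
The plan is to reduce the problem to a collection of decoupled standard infinite-horizon average-cost LQR problems, using the decompositions already recorded in \eqref{eq:cost-split}, \eqref{eq:mf-dynamics}, and \eqref{eq:rel-dynamics}. Concretely: the per-step cost is the sum of a mean-field term $\bar c(\BAR x_t,\BAR u_t)$ and, for each type~$m$, the $N^m$-average of a relative term $\breve c^m(\breve x^i_t,\breve u^i_t)$; the mean-field state $\BAR x_t$ follows the linear system \eqref{eq:mf-dynamics}, driven by $\BAR u_t$ and by an i.i.d.\ zero-mean noise of covariance $\bar{\MAT W}$, and is functionally decoupled from the relative states; and each relative state $\breve x^i_t$ follows the linear system \eqref{eq:rel-dynamics}, driven by $\breve u^i_t$ and noise $\breve w^i_t$, decoupled from everything else. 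The first step is to verify these splits carefully --- in particular that the cross terms in \eqref{eq:per-step-cost} vanish because $\sum_{i\in N^m}\breve x^i_t=0$ and $\sum_{i\in N^m}\breve u^i_t=0$ (identities, not constraints, given the definitions), and that $\sum_{i\in N^m}\breve w^i_t=0$ so that \eqref{eq:rel-dynamics} is consistent.

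Given the decomposition, I would establish matching lower and upper bounds on $J(\pi;\VVEC\theta)$. \textbf{Lower bound.} Fix any admissible $\pi$. Under $\pi$ the sequence $\{\BAR u_t\}$ is adapted to the history, with respect to which the noise driving \eqref{eq:mf-dynamics} is an i.i.d.\ zero-mean sequence of covariance $\bar{\MAT W}$ independent of the past; hence, by the standard lower bound for average-cost LQR --- which holds for \emph{any} control adapted to such a filtration, and which applies here because (A1)--(A2) guarantee that $\bMAT S=\DARE(\bMAT A,\bMAT B,\bbMAT Q,\bbMAT R)$ is well defined --- one has $\liminf_{T\to\infty}\tfrac1T\EXP\big[\sum_{t=1}^T\bar c(\BAR x_t,\BAR u_t)\big]\ge\TR(\bar{\MAT W}\,\bMAT S)$. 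Likewise, for each $m$ and $i\in N^m$, $\{\breve u^i_t\}$ is adapted and $\breve x^i_t$ obeys the stabilizable linear system \eqref{eq:rel-dynamics} with noise covariance $\VAR(\breve w^i_t)$, so its average cost is at least $\TR(\VAR(\breve w^i_t)\brMAT S^m)$; averaging over $i\in N^m$ gives the bound $\TR(\breve{\MAT W}^m\brMAT S^m)$ for the type-$m$ relative cost. Summing over the subsystems and using \eqref{eq:cost-split}, $J(\pi;\VVEC\theta)\ge\sum_{m\in M}\TR(\breve{\MAT W}^m\brMAT S^m)+\TR(\bar{\MAT W}\,\bMAT S)$. \textbf{Upper bound.} The candidate policy \eqref{eq:optimal} makes $\BAR u_t=\bMAT L\BAR x_t$ and, for each $i\in N^m$, $\breve u^i_t=\brMAT L^m\breve x^i_t$; the resulting closed loops $\BAR x_{t+1}=(\bMAT A+\bMAT B\bMAT L)\BAR x_t+(\text{noise})$ and $\breve x^i_{t+1}=(\MAT A^m+\MAT B^m\brMAT L^m)\breve x^i_t+\breve w^i_t$ are stable by the respective DAREs, so by the matching upper bound for average-cost LQR the mean-field cost equals $\TR(\bar{\MAT W}\,\bMAT S)$ and the type-$m$ relative cost equals $\TR(\breve{\MAT W}^m\brMAT S^m)$. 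One should also check that \eqref{eq:optimal} is admissible and that the split \eqref{eq:cost-split} applies to it: the identities $\sum_{i\in N^m}\breve x^i_t=0$ and $\sum_{i\in N^m}\breve u^i_t=0$ persist since they hold at $t=1$ and are preserved by $\sum_{i\in N^m}\breve u^i_t=\brMAT L^m\sum_{i\in N^m}\breve x^i_t=0$ and $\sum_{i\in N^m}\breve w^i_t=0$. Adding the pieces, the average cost of \eqref{eq:optimal} equals $\sum_{m\in M}\TR(\breve{\MAT W}^m\brMAT S^m)+\TR(\bar{\MAT W}\,\bMAT S)$, matching the lower bound, so \eqref{eq:optimal} is optimal and \eqref{eq:performance} holds. (Alternatively, one may invoke \cite[Theorem~6]{arabneydi2016mft} and merely check that the present, slightly more general, noise structure --- the extra common noises $v^m_t$ and $v^0_t$ enter only \eqref{eq:mf-dynamics}, and only through $\bar{\MAT W}$ --- does not alter the Riccati equations.)

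I expect the main obstacle to be the lower-bound step: making precise that the mean-field and relative ``control sequences'' induced by a general $\pi$ are legitimate admissible controls for their respective LQR subproblems relative to a possibly larger filtration, and invoking the average-cost LQR lower bound in that generality --- this is exactly where stabilizability (A2) and the positive-definiteness in (A1) enter, via properties of the DARE and the existence of a stabilizing stationary policy. A secondary, purely bookkeeping, subtlety is confirming that the covariance of the noise driving \eqref{eq:mf-dynamics} is precisely the $\bar{\MAT W}$ defined above --- one must track how the contributions of $\BAR w_t$, $\BAR v_t=\VEC(\SEQ{v_t})$, and the common term $\MAT F^m v^0_t$ combine across the type-blocks.
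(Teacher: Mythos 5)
Your argument is correct, and it is essentially the argument underlying the result: the paper itself gives no proof, deferring to \cite[Theorem~6]{arabneydi2016mft}, whose derivation rests on exactly the mean-field/relative change of variables you use, i.e., the exact splitting \eqref{eq:cost-split} together with the decoupled dynamics \eqref{eq:mf-dynamics} and \eqref{eq:rel-dynamics}. Your self-contained version via matching bounds is sound: for the lower bound, the average-cost LQR inequality indeed only needs each subsystem's control to be adapted to a filtration with respect to which the driving noise is i.i.d.\ zero-mean and independent of the past, which holds here even though $\pi$ sees the full state history and even though the relative noises $\breve w^i_t$ are correlated \emph{across} agents at a fixed time (each per-agent bound is taken separately and then averaged, and $\limsup$ of the sum dominates the sum of $\liminf$'s, so the bookkeeping goes through); for the upper bound, (A1)--(A2) guarantee that the DARE solutions \eqref{eq:DARE} exist and that the gains \eqref{eq:gains} are stabilizing, and the policy \eqref{eq:optimal} preserves the identities $\sum_{i\in N^m}\breve x^i_t=0$, $\sum_{i\in N^m}\breve u^i_t=0$ as you note, so \eqref{eq:cost-split} applies along the closed loop and the costs add up to \eqref{eq:performance}. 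The one place that deserves the care you flag is the covariance bookkeeping for the mean-field noise: the common noise $v^0_t$ enters every type-block through $\MAT F^m$, so its contribution to $\bar{\MAT W}$ is the full (generally non-block-diagonal) covariance of the stacked vector $(\MAT F^1 v^0_t,\dots,\MAT F^{|M|}v^0_t)$; with the paper's simplifying assumptions ($\MAT F^m=I$, scaled-identity covariances) this collapses to the $\bar\sigma^2 I$ used later, and your reduction to checking that the extra common-noise terms leave the Riccati equations \eqref{eq:DARE} unchanged is precisely the step needed to import the cited theorem.
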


\paragraph{Interpretation of the planning solution:}
Note that $\BAR u_t = \bMAT L_t \BAR x_t$ is the optimal control for the
mean-field system with dynamics~\eqref{eq:mf-dynamics} and per-step cost $\bar
c(\BAR x_t, \BAR u_t)$. Moreover, for agent~$i$ of type~$m$, $\breve u^i_t =
\brMAT L^m_t \breve x^i_t$ is the optimal control for the relative system with
dynamics~\eqref{eq:rel-dynamics} and per-step cost $\breve c^m(\breve x^i_t,
\breve u^i_t)$. Theorem~\ref{thm:planning} shows that at every agent $i$ of
type~$m$, we can consider the two
decoupled systems---the mean-field system and the relative system---
solve them separately, and then simply add their respective
controls---$\bar u^m_t$ and $\breve u^i_t$---to obtain the optimal control action at
agent~$i$ in the original mean-field team system. We will exploit this feature
of the planning solution in order to develop a learning algorithm for mean-field teams.

\section{Learning for mean-field teams}
For the ease of exposition, we describe the algorithm for the special case when
all types are of the same dimension (i.e., $d^m_x = d_x$ and $d^m_u = d_u$ for all
$m \in M$) and the same number of agents (i.e., $|N^m| = n$ for all $m \in
M$). We further assume that $d^0_v = d_x$ and $\MAT F^m = I$. Moreover, we
assume noise covariances are given as $\MAT W^i = \sigma_w^2 I$, $i \in N$,
$\MAT V^m = \sigma_v^2I$, $m \in M$, and $\MAT V^0 = \sigma_{v^0}^2I$. 

The above assumptions are not strictly needed for the analysis but we impose them
because, under these assumptions, the covariance matrices $\bar \Sigma$ and
$\breve \Sigma^m$ are scaled identity matrices. In particular, for any $m \in
M$, $\breve \Sigma^m = (1 - \frac 1n) \sigma_w^2 I \eqqcolon \breve \sigma^2
I$ and $\bar \Sigma = (\frac{\sigma_w^2}{n} + \sigma_v^2 + \sigma_{v^0}^2) I
\eqqcolon \bar \sigma^2 I$. This simpler form of the covariance matrices
simplifies the description of the algorithm and the regret bounds.

Following the decomposition presented in Sec.~\ref{sec:planning}, we define
$\bar \theta^\TRANS = [\bMAT A, \bMAT B]$ to be the parameters of the
mean-field dynamics~\eqref{eq:mf-dynamics} and 
$(\breve \theta^m)^\TRANS = [\MAT A^m,
\MAT B^m]$
to be the parameters of the relative dynamics~\eqref{eq:rel-dynamics}. We let
$\brMAT S^m(\breve \theta^m)$ and $\bMAT S(\bar \theta)$ denote the
solution to the Riccati equations~\eqref{eq:DARE} and $\brMAT L^m(\breve
\theta^m)$ and $\bMAT L(\bar \theta)$ denote the corresponding
gains~\eqref{eq:gains}. Let $\breve J^m(\breve \theta^m) = \breve \sigma^2
\TR(\brMAT S(\breve \theta^m))$ and $\bar J(\bar \theta) = \bar \sigma^2
\TR(\bMAT S(\bar \theta))$ denote the performance of the $m$-th
relative system and the mean-field system, respectively. As shown in
Theorem~\ref{thm:planning},
\begin{equation}\label{eq:performance-split}
  J(\VVEC \theta) = \sum_{m \in M} \breve J^m(\breve \theta^m) + \bar J(\bar
  \theta).
\end{equation}

\paragraph{Prior and posterior beliefs:}

We assume that the unknown parameters $\breve \theta^m$, $m \in M$, lie in
compact subsets $\breve \Theta^m$ of $\reals^{(d_x + d_u)
\times d_x}$. Similarly, $\bar \theta$ lies in a compact subset $\bar \Theta$
of $\reals^{|M|(d_x + d_u) \times |M| d_x}$. 
Let $\breve \theta^m(\ell)$ denote the $\ell$-th column of
$\breve \theta^m$. Thus $\breve \theta^m = \COLS(\breve \theta^m(1), \dots, \breve
\theta^m(d_x))$. Similarly, let $\bar \theta(\ell)$ denote the $\ell$-th
column of $\bar \theta$. Thus, $\bar \theta = \COLS(\bar \theta(1), \dots,
\bar \theta(|M|d_x))$. 


We use $\mathcal{N}(\mu, \Sigma)$ to denotes the Gaussian
distribution with mean $\mu$ and covariance $\Sigma$ and $p\bigr|_{\Theta}$ to
denote the projection of probability distribution $p$ on the set $\Theta$.

We assume that the priors 
$\bar p_1$ and $\breve p_1^m, m \in M,$ on $\bar \theta$  and  $\breve \theta^m, m \in M$, respectively, 
satisfy the following:

\begin{description}
 
     \item[(A3)]  $\bar p_1$ is given as:
 \[
      \bar p_1(\bar \theta) =  \big[\prod_{\ell=1}^{|M| d_x} \bar \lambda_1^{\ell}(\bar \theta(\ell))\big]\Bigr|_{\bar \Theta}   \]
     where for $\ell \in \{1, \dots, |M|d_x\}$, $\bar \lambda_1^{\ell} =
     \mathcal{N}(\bar \mu_1(\ell), \bar \Sigma_1)$,
     $\bar \mu_1(\ell) \in \reals^{|M|(d_x + d_u)}$, and
     $\bar  \Sigma_1 \in \reals^{|M|(d_x + d_u) \times |M|(d_x + d_u)}$
     is a positive definite matrix.  
  \item[(A4)] $\breve p^m_1$ is given as:
   \[
      \breve p^m_1(\breve \theta^m) =  \big[\prod_{\ell=1}^{d_x} \breve \lambda^{m,\ell}_1(\breve \theta^m(\ell))\big]\Bigr|_{\breve \Theta^m}  \]
      where for $\ell \in \{1, \dots, d_x\}$, $\breve \lambda_1^{m,\ell} = \mathcal{N}(\breve \mu^m_1(\ell), \breve \Sigma^m_1),$
   $\breve \mu^m_1(\ell) \in
   \reals^{d_x + d_u}$,  and 
    $\breve \Sigma^m_1 \in \reals^{(d_x + d_u) \times (d_x + d_u)}$  is a positive definite matrix.
\end{description}

These assumptions are similar to the assumptions on the prior 
in the recent literature on TS for LQ systems~\citep{ouyang2017control, ouyang2019posterior}.

Following the discussion after Theorem~\ref{thm:planning}, we maintain
separate posterior distributions on $\bar \theta$ and $\breve \theta^m$, $m
\in M$. In
particular, we maintain a posterior distribution  $\bar p_t$ on $\bar \theta$
based on the mean-field state and action history as follows: for any Borel
subset $B$ of $\reals^{|M|(d_x + d_u) \times |M|d_x}$, 
\begin{equation}
   \bar p_t(B) = \PR(\bar \theta \in B \mid \BAR x_{1:t}, \BAR u_{1:t-1}). \label{eq:bar_posterior}
\end{equation}

For every $m \in M$, we also maintain a separate posterior distribution
$\breve p^m_t$ on $\breve \theta^m$ as follows. At each time $t > 1$, we select
an agent $j^m_{t-1} \in N^m$ as $\arg \max_{i \in N^m} (\breve z^i_{t-1})^\TRANS
\breve \Sigma^m_{t-1} \breve z^i_{t-1}t$, where $\breve \Sigma^m_{t-1}$ is a
covariance matrix defined recursively
by~\eqref{eq:sigma_breve_update}. Then, for any Borel subset $B$ of
$\reals^{(d_x + d_u) \times d_x}$, 
\begin{equation}
  \breve p^m_t(B) = \PR(\breve \theta^m \in B \mid 
    \{ \breve x^{j^m_s}_s, \breve u^{j^m_s}_s, \breve x^{j^m_s}_{s+1} \}_{1 \le
      s < t }\} ), \label{eq:breve_posterior}
\end{equation}
See the supplementary file for a discussion on the rule to select $j^m_{t-1}$.

For the ease of notation, we use $\BAR z_t = \VEC(\SEQ{\bar z_t})$, where $\bar
z^m_t = \VEC(\bar x^m_t, \bar u^m_t)$, and $\breve z^i_t = \VEC(\breve x^i_t,
\breve u^i_t)$. Then, we can write the
dynamics~\eqref{eq:mf-dynamics}--\eqref{eq:rel-dynamics} of the mean-field and
the relative systems~as
\begin{subequations}\label{eq:bar-dynamics}
\begin{align}
  \BAR{x}_{t+1} &= \bar{\theta}^\TRANS \BAR{z}_t + \BAR w_t + \BAR v_t + v^0_t,
  \label{eq:mf-bar}
  \\
  \breve{x}_{t+1}^i &= (\breve{\theta}^m)^\TRANS \breve{z}^i_t +
  \breve{w}_t^i,
  \quad \forall i \in N^m, m \in M.
  \label{eq:rel-bar}
\end{align}
\end{subequations}

Recall that $\bar \sigma^2 = \sigma_w^2/n + \sigma_v^2 + \sigma_{v^0}^2$ and
$\breve \sigma^2 = (1 - \frac1n)\sigma_w^2$. 

\begin{lemma}
  The posterior distributions are as follows:
  \begin{enumerate}[leftmargin=1.5em, font=\textup]
    \item The posterior on $\bar \theta$ is
     \[ \bar p_t = \big[ \prod_{\ell=1}^{|M| d_x} \bar \lambda_t^{\ell}(\bar
     \theta(\ell)) \big]\Bigr|_{\bar \Theta}, \]
     where for $\ell \in \{1, \dots, |M|d_x\}$,  $\bar \lambda_t^{\ell} =
      \mathcal{N}(\bar \mu_t(\ell), \bar \Sigma_t)$, 
      and  
      \begin{subequations}\label{eq:p-bar}
      \begin{align}
        \bar \mu_{t+1}(\ell) &= \bar \mu_{t}(\ell) + 
        \frac{ 
          \bar \Sigma_t \BAR z_t 
          \bigl( \BAR x_{t+1}(\ell) - \bar \mu_t(\ell)^\TRANS
        \BAR z_t \bigr) }
        { \bar{\sigma}^2 +   (\BAR z_t)^\TRANS \bar \Sigma_t 
        \BAR z_t }, \label{eq:mu_bar_update} \\
        \bar \Sigma_{t+1}^{-1} &= \bar \Sigma_t^{-1} 
        + \frac{1}{\bar{\sigma}^2}  \BAR z_t \BAR z_t^\TRANS.
        \label{eq:sigma_bar_update}
      \end{align}
      \end{subequations}
    \item The posterior on $\breve \theta^m$, $m \in M$, at time~$t$ is
    \[ \breve p^m_t(\breve \theta^m) = \big[ \prod_{\ell=1}^{d_x} \breve
    \lambda^{m,\ell}_t(\breve \theta^m(\ell)) \big]\Bigr|_{\breve \Theta^m},\]
      where for $\ell \in \{1, \dots, d_x\}$,  $\breve \lambda^{m,\ell}_t =
      \mathcal{N}(\breve \mu^m_t(\ell), \breve \Sigma^m_t)$,
     and 
      \begin{subequations}\label{eq:p-breve}
      \begin{align}
        \breve \mu^m_{t+1}(\ell) &= \breve \mu^m_{t}(\ell) + 
        \frac{ 
          \breve \Sigma^m_t \breve z^{j^m_t}_t 
          \bigl( \breve x^{j^m_t}_{t+1}(\ell) - \breve \mu^m_t(\ell)^\TRANS
        \breve z^{j^m_t}_t \bigr) }
        { \breve{\sigma}^2 +  (\breve z^{j^m_t}_t)^\TRANS \breve \Sigma^m_t 
        \breve z^{j^m_t}_t },
        \label{eq:mu_breve_update} \\
        (\breve \Sigma^m_{t+1})^{-1} &= (\breve \Sigma^m_t)^{-1} 
        + \frac{1}{\breve{\sigma}^2} \breve z^{j^m_t}_t (\breve z^{j^m_t}_t)^\TRANS.
        \label{eq:sigma_breve_update}
      \end{align}
      \end{subequations}
  \end{enumerate}
\end{lemma}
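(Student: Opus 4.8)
The plan is to prove both parts by induction on $t$, using the standard conjugacy of a Gaussian prior with a linear--Gaussian observation. The base case $t=1$ is precisely assumptions (A3) and (A4). For the inductive step of part~1, assume $\bar p_t = \big[\prod_{\ell=1}^{|M|d_x}\bar\lambda_t^\ell\big]\big|_{\bar\Theta}$ with $\bar\lambda_t^\ell = \mathcal N(\bar\mu_t(\ell),\bar\Sigma_t)$ (the same covariance for every $\ell$), and compute $\bar p_{t+1}(B) = \PR(\bar\theta\in B \mid \BAR x_{1:t+1}, \BAR u_{1:t})$ via Bayes' rule with the one-step likelihood of $\BAR x_{t+1}$.

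The first step is to identify that likelihood. By \eqref{eq:mf-bar}, $\BAR x_{t+1} = \bar\theta^\TRANS\BAR z_t + \xi_t$, where $\BAR z_t$ is measurable with respect to $(\BAR x_{1:t},\BAR u_{1:t})$ and $\xi_t = \BAR w_t + \BAR v_t + v^0_t$ is zero-mean Gaussian with covariance $\bar\sigma^2 I$, independent of the history and of $\bar\theta$. The one point that requires care is that $\BAR u_t$ is produced by the TS algorithm from a sample drawn from $\bar p_t$, so $\BAR u_t$ is a (possibly randomized) function of $(\BAR x_{1:t},\BAR u_{1:t-1})$ and of auxiliary randomness that is conditionally independent of $\bar\theta$ given the history; hence $\PR(\bar\theta\in B\mid \BAR x_{1:t},\BAR u_{1:t}) = \bar p_t(B)$, i.e.\ additionally conditioning on $\BAR u_t$ does not change the belief, and the conditional law of $\BAR x_{t+1}$ given $(\BAR x_{1:t},\BAR u_{1:t},\bar\theta)$ is $\mathcal N(\bar\theta^\TRANS\BAR z_t,\bar\sigma^2 I)$. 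Because this covariance is a scalar multiple of the identity, the likelihood splits over the $|M|d_x$ coordinates: $\BAR x_{t+1}(\ell)$ depends on $\bar\theta$ only through its $\ell$-th column $\bar\theta(\ell)$, via the scalar model $\mathcal N(\bar\theta(\ell)^\TRANS\BAR z_t,\bar\sigma^2)$, independently across $\ell$.

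Now apply Bayes' rule. Since the prior $\bar p_t$ and the likelihood both factor over $\ell$, and since restricting a density to $\bar\Theta$ and renormalizing commutes with multiplying by a likelihood (both operations reweight the same underlying measure), the posterior is again of the form $\big[\prod_\ell\bar\lambda_{t+1}^\ell\big]\big|_{\bar\Theta}$ with $\bar\lambda_{t+1}^\ell \propto \bar\lambda_t^\ell\cdot\mathcal N(\BAR x_{t+1}(\ell);\bar\theta(\ell)^\TRANS\BAR z_t,\bar\sigma^2)$. The standard completion of squares for Gaussians gives $\bar\lambda_{t+1}^\ell = \mathcal N(\bar\mu_{t+1}(\ell),\bar\Sigma_{t+1})$ with $\bar\Sigma_{t+1}^{-1} = \bar\Sigma_t^{-1} + \bar\sigma^{-2}\BAR z_t\BAR z_t^\TRANS$, which is \eqref{eq:sigma_bar_update} and, crucially, is the same matrix for every $\ell$, so the inductive hypothesis on the shared covariance is preserved, and $\bar\mu_{t+1}(\ell) = \bar\Sigma_{t+1}\big(\bar\Sigma_t^{-1}\bar\mu_t(\ell) + \bar\sigma^{-2}\BAR z_t\,\BAR x_{t+1}(\ell)\big)$. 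Applying the Sherman--Morrison identity, $\bar\Sigma_{t+1} = \bar\Sigma_t - \bar\Sigma_t\BAR z_t\BAR z_t^\TRANS\bar\Sigma_t/(\bar\sigma^2 + \BAR z_t^\TRANS\bar\Sigma_t\BAR z_t)$, and simplifying turns this expression into the innovation form \eqref{eq:mu_bar_update}; this is the only nontrivial algebra and it is routine.

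Part~2 is the same argument applied to the relative dynamics \eqref{eq:rel-bar}. For each $m$, the new observation incorporated into $\breve p^m_{t+1}$ is $\breve x^{j^m_t}_{t+1} = (\breve\theta^m)^\TRANS\breve z^{j^m_t}_t + \breve w^{j^m_t}_t$, where $\breve w^{j^m_t}_t \sim \mathcal N(0,\breve\sigma^2 I)$ again has scalar-identity covariance, and where the selected index $j^m_t$ is a deterministic function of $\{\breve z^i_t\}_{i\in N^m}$ and $\breve\Sigma^m_t$ and hence is measurable with respect to the conditioning $\sigma$-field, so it may be treated as fixed when applying Bayes' rule. The identical column-wise conjugate update then yields \eqref{eq:p-breve}. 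The only genuinely delicate point in the whole proof is the conditioning argument of the second paragraph --- checking that the control inputs, although generated from posterior samples, reveal nothing about the true parameters beyond what the observed states already do, so that the one-step likelihood is the clean linear--Gaussian one; the rest is bookkeeping of the restriction to the compact sets $\bar\Theta$ and $\breve\Theta^m$ together with elementary Gaussian conjugacy.
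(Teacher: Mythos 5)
Your proof is correct and follows essentially the same route as the paper: the paper's proof simply observes that the dynamics \eqref{eq:bar-dynamics} are linear with Gaussian noise of covariance $\bar\sigma^2 I$ (resp.\ $\breve\sigma^2 I$) and invokes standard results on recursive Bayesian Gaussian linear regression, which is exactly the column-wise conjugate update you carry out by induction. Your additional care about conditioning on the (posterior-sample-generated) controls and on the selected index $j^m_t$ just makes explicit what the paper delegates to the cited standard results.
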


\begin{proof}
  Note that the dynamics of $\BAR{x}_t$ and $\breve{x}^i_t$ in
  \eqref{eq:bar-dynamics} are linear and the noises $\BAR{w}_t + \BAR v_t +
  v^0_t$ and $\breve w^i_t$ are Gaussian. Therefore, the result follows from
  standard results in Gaussian linear regression
  \citep{sternby1977consistency}.
\end{proof}

\paragraph{The Thompson sampling algorithm:}

We propose a Thompson sampling algorithm referred to as \texttt{TSDE-MF} which
is inspired by the \texttt{TSDE} (Thompson sampling with dynamic episodes) algorithm
proposed in~\cite{ouyang2017control, ouyang2019posterior} and the structure of
the optimal planning solution for the mean-field teams described in
Sec.~\ref{sec:planning}.

The \texttt{TSDE-MF} algorithm consists of a coordinator $\mathcal C$ and
$|M|+1$ \emph{actors}: a mean-field actor $\bar{\mathcal A}$ and a relative
actor $\breve{\mathcal{A}}^m$, for each $m \in M$. These
actors are described below while the whole algorithm is presented in
Algorithm~\ref{alg:tsde_mf}.
\begin{itemize}[leftmargin=1em]
  \item At each time, the coordinator $\mathcal C$ observes the current global
    state $(x^i_t)_{i \in N}$, computes the mean-field state $\BAR x_t$ and
    the relative states $(\breve x^i_t)_{i \in N}$, and sends 
    the mean-field state $\BAR x_t$ to be the mean-field actor
    $\bar{\mathcal A}$ and the relative states $\bm{\breve x}^m_t =
    (\breve x^i_t)_{i \in N^m}$ of the all the agents of type~$m$ to the
    relative actor~$\breve{\mathcal A}^m$. The mean-field actor
    $\bar{\mathcal A}$ computes the
    mean-field control $\BAR u_t$ and the relative actor $\breve {\mathcal
    A}^m$ computes the relative control $\bm{\breve u}^m_t = (\breve u^i_t)_{i
    \in N^m}$ (as per the details presented below) and sends it back to the
    coordinator~$\mathcal{C}$. The coordinator then computes and executes the
    control action $u^i_t = \bar u^m_t + \breve u^i_t$ for each agent~$i$ of
    type~$m$.

  \item The mean-field actor $\bar{\mathcal{A}}$ maintains the posterior $\bar
    p_t$ on $\bar \theta$ according to~\eqref{eq:p-bar}. The actor works in
    episodes of dynamic length. Let $\bar t_k$ and $\bar T_k$ denote the start
    and the length of episode~$k$, respectively. Episode $k$ ends if the
    determinant of covariance $\bar \Sigma_t$ falls below half of its value at
    the beginning of the episode (i.e., $\det(\bar \Sigma_t) < 0.5 \det(\bar
    \Sigma_{\bar t_k})$) or if the length of the episode is one more than the
    length of the previous episode (i.e., $t - \bar t_k > \bar T_{k-1}$).
    Thus,
    \begin{multline}
      \bar t_{k+1} = \min \bigl\{ t > t_k : \det(\bar \Sigma_t) < 0.5\det(\bar
        \Sigma_{t_k})
        \\ \text{or } t - \bar t_k > \bar T_{k-1} \bigr\}.
    \end{multline}
    At the beginning of episode~$k$, the mean-field actor $\bar{\mathcal{A}}$
    samples a parameter $\bar \theta_k$ from the posterior distribution $\bar
    p_t$. During episode~$k$, the mean-field actor $\bar{\mathcal{A}}$
    generates the mean-field controls using the samples $\bar \theta_k$,
    i.e., $\BAR u_t = \bMAT L(\bar \theta_k) \BAR x_t$.

  \item Each relative actor $\breve{\mathcal A}^m$ is similar to the
    mean-field actor. Actor $\breve{\mathcal A}^m$ maintains the posterior
    $\breve p^m$ on $\breve \theta^m$ according to~\eqref{eq:p-breve}. The
    actor works in episodes of dynamic length. The episodes of each relative
    actor $\breve{\mathcal{A}}^m$ and the mean-field actor $\bar{\mathcal{A}}$
    are separate from each other.\footnote{We use the
      episode count~$k$ as a local variable which is different for each
    actor.} Let $\breve t^m_k$ and $\breve T^m_k$ denote
    the start and length of episode~$k$, respectively. The termination
    condition for each episode is similar to that of the mean-field
    actor~$\bar{\mathcal{A}}$. In particular,
    \begin{multline}
      \breve t^m_{k+1} = \min \bigl\{ t > t^m_k : \det(\breve \Sigma^m_t) < 0.5\det(\breve
        \Sigma^m_{t^m_k})
        \\ \text{or } t - \breve t^m_k > \breve T^m_{k-1} \bigr\}.
    \end{multline}
    At the beginning of episode~$k$, the relative actor $\breve{\mathcal{A}}^m$
    samples a parameter $\breve \theta^m_k$ from the posterior distribution $\breve
    p^m_t$. During episode~$k$, the relative actor
    $\breve{\mathcal{A}}^m$ generates the relative controls using the
    sample $\breve \theta^m_k$, i.e., $\bm{\breve
    u}^m_t = (\brMAT L^m(\breve \theta^m_{k})\breve x^i_t)_{i \in N^m}$.

\end{itemize}

\begin{algorithm}[!t]
\caption{\texttt{TSDE-MF}}
\label{alg:tsde_mf}
\begin{algorithmic}[1]
  \State \textbf{initialize mean-field actor:} $\bar \Theta$, $(\bar \mu_1,
  \bar \Sigma_1)$, $\bar t_0 = 0$, $\bar T_{-1} = 0$, $k = 0$
  \State \textbf{initialize relative-actor-$m$:} $\breve \Theta^m$, $(\breve
  \mu^m_1, \breve \Sigma^m_1)$, $\breve t^m_0 = 0$, $\breve T^m_{-1} = 0$, $k = 0$
\For{$t = 1, 2, \dots $}
  \State \textit{observe} $(x^i_t)_{i \in N}$
  \State \textit{compute} $\BAR x_t$, $(\bm{\breve x}^m_t)_{m \in M}$
  \State $\BAR u_t \gets \textsc{mean-field-actor}(\BAR x_t)$
  \For{$m \in M$}
      \State $\bm{\breve u}^m_t \gets \textsc{relative-actor-$m$}(\bm{\breve
      x}^m_t)$
      \For{$i \in N^m$}
      \State \textit{agent~$i$ applies control} $u^i_t = \bar u^m_t + \breve
      u^i_t$
      \EndFor
  \EndFor
 \EndFor
\end{algorithmic}
\medskip
\begin{algorithmic}[1]
\Function{mean-field-actor}{$\BAR x_t$}
  \State \textbf{global var} $t$
  \State Update $\bar p_t$ according~\eqref{eq:p-bar}
  \If{$t - \bar t_k > \bar T_{k-1}$ or $\det(\bar \Sigma_t) < 0.5\det(\bar
\Sigma_{k})$}
    \State $T_k \gets t - \bar t_k$,
    $k \gets k + 1$,
     $\bar t_k \gets t$
    \State \textit{sample} $\bar \theta_k \sim \bar p_t$
    \State $\bMAT L \gets \bMAT L(\bar \theta_k)$
\EndIf
  \State \textbf{return} $\bMAT L \BAR x_t$
\EndFunction
\end{algorithmic}
\medskip
\begin{algorithmic}[1]
\Function{relative-actor-$m$}{$(\breve x^i_t)_{i \in N^m}$}
  \State \textbf{global var} $t$
  \State Update $\breve p^m_t$ according~\eqref{eq:p-breve}
  \If{$t - \breve t^m_k > \breve T^m_{k-1}$ or $\det(\breve \Sigma^m_t) < 0.5\det(\breve
\Sigma^m_{k})$}
    \State $T^m_k \gets t - \breve t^m_k$,
    $k \gets k + 1$,
     $\breve t^m_k \gets t$
    \State \textit{sample} $\breve \theta^m_k \sim \breve p^m_t$
    \State $\brMAT L^m \gets \brMAT L^m(\breve \theta^m_k)$
\EndIf
  \State \textbf{return} $(\brMAT L^m \breve x^i_t)_{i \in N^m}$
\EndFunction

\end{algorithmic}
\end{algorithm}

Note that the algorithm does not depend on the horizon~$T$.
A partially distributed version of the algorithm is presented in the
conclusion.

\paragraph{Regret bounds:}
We make the following assumption to ensure that the closed loop dynamics of
the mean field state and the relative states of each agent are stable. We use the
notation $\|\cdot\|$ to denote the induced norm of a matrix.

\begin{description}
  \item[(A5)] There exists $\delta \in (0, 1)$ such that
    \begin{itemize}[leftmargin=1em,topsep=0pt]
      \item for any $\bar \theta, \bar \phi \in \bar \Theta$ where $\bar
        \theta^\TRANS = [\bMAT A_{\bar \theta}, \bMAT B_{\bar \theta} ]$, we
        have
        \(
          \| {\bMAT A_{\bar \theta} + \bMAT B_{\bar \theta}
          \bMAT L(\bar \phi) } \| \leq
          \delta.
        \)
      \item for any $m \in M$, 
        $\breve \theta^m,
        \breve \phi^m \in \breve \Theta^m$, 
        where $(\breve \theta^m)^\TRANS = [\MAT A_{\breve \theta^m}, \MAT B_{\breve
        \theta^m}]$, we have
    \(
      \| \MAT A_{\breve \theta^m}+ \MAT B_{\breve \theta^m} \brMAT L(\breve
      \phi^m) \|
      \leq \delta.
    \)
\end{itemize}
\end{description}

This assumption is similar to an assumption imposed in the literature on TS
for LQ systems \citep{ouyang2019posterior}. According to Theorem~11 in \cite{simchowitz2020naive}, the assumption is satisfied if 
\begin{align*}
\bar \Theta &= \{ (\bMAT A, \bMAT B): \|\bMAT A - \bMAT A_0\| \leq \bar \epsilon,
\|\bMAT B - \bMAT B_0\| \leq \bar \epsilon \}
\\
\brMAT \Theta^m &= \{ (\brMAT A^m, \brMAT B^m): ||\brMAT A^m - \brMAT A^m_0|| \leq
  \breve \epsilon^m,
\|\brMAT B^m - \brMAT B^m_0\| \leq \breve \epsilon^m \}
\end{align*}
for stabilizable $(\bMAT A^m_0, \bMAT B^m_0)$ and $(\brMAT A^m_0, \brMAT
B^m_0)$, and small constants $\bar \epsilon$ $\breve \epsilon^m$ depending on
the choice of $(\bar A^m_0, \bar B^m_0)$ and $(\brMAT A^m_0, \brMAT B^m_0)$.
In other words, the assumption holds when the true system is in a small
neighborhood of a known nominal system, and the small neighborhood can be
learned with high probability by running some stabilizing procedure
\citep{simchowitz2020naive}.
 
The following result provides an 
upper bound on the regret of the proposed algorithm.
\begin{theorem}\label{thm:main}
  Under \textup{(A1)--(A5)}, the regret of \textup{\texttt{TSDE-MF}} is upper
  bounded as follows:
  \begin{equation*}
    R(T;\textup{\texttt{TSDE-MF}} ) \leq \tilde{\mathcal O} \bigl( (\bar \sigma^2
      |M|^{1.5} + \breve \sigma^2 |M|)
    d_x^{0.5}(d_x + d_u) \sqrt{T} \bigr).
  \end{equation*}
\end{theorem}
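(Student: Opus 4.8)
The plan is to exploit the decomposition $J(\VVEC\theta) = \sum_{m\in M}\breve J^m(\breve\theta^m) + \bar J(\bar\theta)$ from \eqref{eq:performance-split} together with the additive structure of the per-step cost \eqref{eq:cost-split} to split the regret into a mean-field part and $|M|$ relative parts, each of which is (essentially) the regret of a single-agent TSDE-style controller. Concretely, I would first write
\[
  R(T;\texttt{TSDE-MF}) = \EXP\Bigl[\sum_{t=1}^T \bar c(\BAR x_t,\BAR u_t) - T\bar J(\bar\theta)\Bigr]
  + \sum_{m\in M}\EXP\Bigl[\sum_{t=1}^T \AVG \breve c^m(\breve x^i_t,\breve u^i_t) - T\breve J^m(\breve\theta^m)\Bigr],
\]
and then bound each term. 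The key observation is that, because the mean-field dynamics \eqref{eq:mf-bar} and the relative dynamics \eqref{eq:rel-bar} are decoupled linear systems driven by independent Gaussian noise, and because the \texttt{TSDE-MF} actors run independent TSDE episodes on each of them, each bracketed term is exactly a Bayesian regret of the form analyzed in \cite{ouyang2017control,ouyang2019posterior}. So the bulk of the proof is to adapt their analysis to each subsystem and then sum.

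For the mean-field term I would follow the standard TSDE argument: (i) use the posterior-sampling property (conditioned on the history at the start of episode $k$, $\bar\theta_k$ and $\bar\theta$ have the same law) to replace occurrences of $\bar\theta$ by $\bar\theta_k$ inside expectations, reducing the regret to a sum over episodes of Bellman-error and gain-mismatch terms; (ii) bound the number of episodes $\bar K_T = \tilde{\mathcal O}(\sqrt{|M|(d_x+d_u)\,T})$ using the two stopping rules (the "length grows by one" rule and the "determinant halves" rule, the latter controlled via $\sum_t \log\det\bar\Sigma_{t+1}^{-1}/\det\bar\Sigma_t^{-1}$ as in the usual self-normalized/elliptical-potential bound, where the relevant dimension of $\BAR z_t$ is $|M|(d_x+d_u)$); (iii) use (A5) to get uniform boundedness of the closed-loop state, so that the one-step cost deviations are controlled by $\|\bar\theta_k - \bar\theta\|$ weighted by $\BAR z_t$; (iv) use the posterior-concentration / regression bound to show $\sum_{k}\sum_{t\in\text{ep }k}\|(\bar\theta_k-\bar\theta)^\TRANS\BAR z_t\| = \tilde{\mathcal O}(\bar\sigma\sqrt{|M|(d_x+d_u)}\cdot\sqrt{|M|d_x}\cdot\sqrt T)$, where one $\sqrt{|M|d_x}$ factor comes from the $|M|d_x$ output coordinates of $\BAR x_{t+1}$ each estimated separately. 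Multiplying the dimension factors gives the $\bar\sigma^2|M|^{1.5}d_x^{0.5}(d_x+d_u)\sqrt T$ scaling (with $\bar\sigma^2$ tracking the noise level through the regression variance and the cost coefficient $\bbMAT Q,\bbMAT R$). For each relative term $m$, the same argument applies to the $(d_x,d_u)$-dimensional system \eqref{eq:rel-bar}, giving regret $\tilde{\mathcal O}(\breve\sigma^2 d_x^{0.5}(d_x+d_u)\sqrt T)$ per type; one subtlety is that the posterior $\breve p^m_t$ is updated using only the single selected agent $j^m_t$, so I must check that the data $\{\breve z^{j^m_s}_s,\breve x^{j^m_s}_{s+1}\}$ still satisfies the elliptical-potential bound with the right covariance $\breve\Sigma^m_t$, and that the $\arg\max$ selection rule only helps (it maximizes the information gain, so the determinant-growth bound and hence the episode count still hold). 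Summing over $m\in M$ contributes the $\breve\sigma^2|M|d_x^{0.5}(d_x+d_u)\sqrt T$ term. Adding the two pieces yields the claimed bound.

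I expect the main obstacle to be handling the \emph{relative subsystems correctly in two respects}: first, $\breve x^i_t$ is a derived quantity (a difference $x^i_t - \bar x^m_t$) whose noise $\breve w^i_t = w^i_t - \bar w^m_t$ is correlated across agents of the same type, so I need to verify that conditioning on the selected-agent trajectory gives a genuinely well-specified Gaussian regression with variance $\breve\sigma^2 = (1-1/n)\sigma_w^2$ and that the Bayesian posterior-sampling lemma (same law of $\breve\theta^m_k$ and $\breve\theta^m$ given the history) genuinely applies despite the adaptive agent-selection rule $j^m_t$; and second, ensuring the $|M|$ relative regrets and the mean-field regret can be added without hidden cross terms — this needs the exact cost split \eqref{eq:cost-split} and the fact that $\sum_{i\in N^m}\breve x^i_t = \sum_{i\in N^m}\breve u^i_t = 0$, plus the observation that the average relative cost $\AVG\breve c^m$ over $N^m$ agents concentrates/equals (in expectation) the single-agent relative cost up to a factor absorbed in $\breve\sigma^2$. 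Getting the bookkeeping on these constants right — so that the $n$-dependence cancels and only $|M|$, $d_x$, $d_u$, $\bar\sigma^2$, $\breve\sigma^2$ appear — is where most of the care is needed; the episode-count and self-normalized-martingale machinery is otherwise routine and imported wholesale from \cite{ouyang2019posterior}.
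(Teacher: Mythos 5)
Your overall route is the same as the paper's: decompose the regret via \eqref{eq:cost-split} and \eqref{eq:performance-split} into a mean-field part plus per-type relative parts, bound the mean-field part by invoking the single-agent \texttt{TSDE} analysis of \cite{ouyang2019posterior} for the $|M|d_x$-dimensional system (your dimension bookkeeping there gives the correct $\bar\sigma^2|M|^{1.5}d_x^{0.5}(d_x+d_u)\sqrt T$), and redo the episode-wise Bellman decomposition for the relative systems. The paper likewise bounds each per-agent relative regret $\breve R^{i,m}(T)$ by splitting it into a sampling-error term, a telescoping term, and a model-mismatch term, with the posterior-sampling lemma applied after checking measurability with respect to the selected-agent history, which you correctly flag as a point to verify.

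However, there is a genuine gap in how you handle the relative systems, and it is exactly at the paper's main technical contribution. The regret of type $m$ is the average of $\breve R^{i,m}(T)$ over \emph{all} agents $i\in N^m$, and for each such agent the model-mismatch term requires controlling $\sum_{t}(\breve z^i_t)^\TRANS\breve\Sigma^m_t\breve z^i_t$, where $\breve\Sigma^m_t$ is built only from the selected agents' regressors $\breve z^{j^m_s}_s$, $s<t$. The elliptical-potential (determinant-growth) lemma applies only to the data that updates $\breve\Sigma^m_t$, so it says nothing directly about agent $i\neq j^m_t$; neither your claim that the $\arg\max$ rule ``only helps'' the episode count (the episode bound holds for \emph{any} selection rule) nor the appeal to the averaged cost $\AVG\breve c^m$ ``concentrating'' to a single-agent cost closes this. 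The missing idea is the pointwise inequality $(\breve z^i_t)^\TRANS\breve\Sigma^m_t\breve z^i_t\le(\breve z^{j^m_t}_t)^\TRANS\breve\Sigma^m_t\breve z^{j^m_t}_t$, which holds for every $i\in N^m$ precisely because $j^m_t$ is chosen as the maximizer of this quadratic form; only after this substitution can the potential bound on the selected-agent sequence (together with the moment bounds on $\breve X^i_T$ and $\breve X^{j_t}_T$) control every agent's mismatch term and yield $\breve R^{i,m}(T)\le\tilde{\mathcal O}(\breve\sigma^2 d_x^{0.5}(d_x+d_u)\sqrt T)$. Without this step the relative-system bound, and hence the $\breve\sigma^2|M|$ term in the theorem, does not follow.
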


Recall that $\bar \sigma^2 = \sigma_w^2/n + \sigma_v^2 + \sigma_{v^0}^2$ and
$\breve \sigma^2 = (1 - \frac1n)\sigma_w^2$. So, we can say that
$R(T;\texttt{TDSE-MF}) \le \tilde{\mathcal{O}}(\bar \sigma^2 |M|^{1.5}
d_x^{0.5}(d_x + d_u) \sqrt{T})$. 
Compared with the original \texttt{TSDE} regret
$\tilde{\mathcal{O}}(n^{1.5}|M|^{1.5} \sqrt{T})$
which scales superlinear with the number of agents, the regret of the proposed
algorithm is bounded by $\tilde{\mathcal{O}}(|M|^{1.5} \sqrt{T})$ irrespective of the total number of agents.

The following special cases are of
interest:
\begin{itemize}[topsep=0pt]
  \item In the absence of common noises (i.e., $\sigma_v^2 =
    \sigma_{v^{0}}^2 = 0$), and when $n \gg |M|$, 
    $R(T;\texttt{TDSE-MF}) \le \tilde{\mathcal{O}}(\breve \sigma^2 |M| d_x^{0.5}(d_x +
    d_u) \sqrt{T})$. 
  \item For homogeneous systems (i.e., $|M| = 1$), we have
    $R(T;\texttt{TDSE-MF}) \le \tilde{\mathcal{O}}( (\bar \sigma^2 + \breve
    \sigma^2) d_x^{0.5}(d_x + d_u) \sqrt{T})$. Thus, the scaling with the
    number of agents is $\tilde{\mathcal{O}}( (1 + \frac1n) \sqrt{T})$. 
\end{itemize}

Note that these results show that in mean-field systems with common noise
regret scales as $\mathcal{O}(|M|^{1.5})$ in the number of types, while in
mean-field systems without common noise, the regret scales as
$\mathcal{O}(|M|)$. \emph{Thus, the presence of common noise
fundamentally changes the scaling of the learning algorithm.}

\section{Regret analysis}
For the ease of notation, we simply use $R(T)$ instead of $R(T;\texttt{TSDE-MF})$
in this section. Eq.~\eqref{eq:performance-split} and~\eqref{eq:cost-split}
imply that the regret may be decomposed as
  \begin{equation}\label{eq:regret-split}
    R(T) = \bar R(T) + \sum_{m \in M} \frac 1n \sum_{i \in N^m} \breve R^{i,m}(T)
  \end{equation}
  where 
  \begin{align*}
    \bar{R}(T) &:= \EXP \biggl[ \sum_{t=1}^T  \bar{c}(\BAR{x}_t,\BAR{u}_t) -
    T \bar{J}(\bar{\theta}) \biggr], \\
    \breve{R}^{i,m}(T) &:= \EXP\biggl[ \sum_{t=1}^T
    \breve{c}^m(\breve{x}_t^{i},\breve{u}_t^{i} ) - T
    \breve{J}(\breve{\theta^m})
  \biggr].
  \end{align*}

Note that $\bar R(T)$ is the regret associated with the mean-field system and $\breve
R^{i,m}(T)$ is the regret of the $i$-th relative system of type~$m$. Observe
that for the mean-field actor in our algorithm is essentially implementing the
\texttt{TSDE} algorithm of \cite{ouyang2017control, ouyang2019posterior} for
the mean-field system with dynamics~\eqref{eq:mf-dynamics} and per-step cost
$\bar c(\BAR x_t, \BAR u_t)$. This is because:
\begin{enumerate}
  \item As mentioned in the discussion after Theorem~\ref{thm:planning}, we
    can view $\BAR u_t = \bMAT L(\bar \theta) \BAR x_t$ as the optimal
    control action of the mean-field system.
  \item The posterior distribution $\bar p_t$ on $\bar \theta$ depends only
    on $(\BAR x_{1:t}, \BAR u_{1:t-1})$. 
\end{enumerate}
Thus, $\bar R(T)$ is precisely the regret of the \texttt{TSDE}
algorithm analyzed in~\cite{ouyang2019posterior}. Therefore, we have the
following.
\begin{lemma}\label{lem:bar_regret_terms}
  For the mean-field system,
  \begin{equation}\label{eq:bar_regret}
    \bar{R}(T) \leq  
    \mathcal{\tilde{O}}\bigl(\bar \sigma^2 |M|^{1.5} d_x^{0.5} (d_x+d_u) \sqrt{T} 
    \bigr).
  \end{equation}
\end{lemma}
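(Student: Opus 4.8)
The plan is to reduce the bound on $\bar R(T)$ entirely to the existing regret analysis of the \texttt{TSDE} algorithm in \cite{ouyang2019posterior}, applied to the mean-field subsystem, and then track how the relevant problem parameters translate. First I would observe, as the paragraph preceding the lemma already argues, that the mean-field actor $\bar{\mathcal A}$ is literally running \texttt{TSDE} on the linear system~\eqref{eq:mf-bar}, namely $\BAR x_{t+1} = \bar\theta^\TRANS \BAR z_t + (\BAR w_t + \BAR v_t + v^0_t)$, with quadratic cost $\bar c(\BAR x_t, \BAR u_t) = \BAR x_t^\TRANS \bbMAT Q \BAR x_t + \BAR u_t^\TRANS \bbMAT R \BAR u_t$: the episode schedule (doubling trigger plus determinant-halving trigger), the Gaussian posterior update~\eqref{eq:p-bar}, the sampling of $\bar\theta_k$ at episode starts, and the certainty-equivalent gain $\bMAT L(\bar\theta_k)$ are all exactly the \texttt{TSDE} steps. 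The posterior $\bar p_t$ depends only on the mean-field history $(\BAR x_{1:t}, \BAR u_{1:t-1})$, so the analysis of \cite{ouyang2019posterior} applies verbatim once we check its hypotheses: assumption (A1) gives $\bbMAT Q, \bbMAT R > 0$, (A2) gives stabilizability of $(\bMAT A, \bMAT B)$, (A3) supplies the Gaussian (product-of-Gaussians restricted to a compact set) prior of the required form, and (A5) supplies the uniform closed-loop contraction $\|\bMAT A_{\bar\theta} + \bMAT B_{\bar\theta}\bMAT L(\bar\phi)\| \le \delta < 1$ on $\bar\Theta$. Hence the generic \texttt{TSDE} bound gives $\bar R(T) \le \tilde{\mathcal O}(\sigma^2 \, d_{\mathrm{state}}^{0.5}(d_{\mathrm{state}} + d_{\mathrm{control}})\sqrt T)$, where $\sigma^2$ is the noise scale, $d_{\mathrm{state}}$ the state dimension, and $d_{\mathrm{control}}$ the control dimension \emph{of the mean-field system}.

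The substantive step is then simply bookkeeping of these three quantities for the mean-field system. Under the simplifying assumptions of the paper, the mean-field state $\BAR x_t$ lives in $\reals^{|M| d_x}$ and the mean-field control $\BAR u_t$ in $\reals^{|M| d_u}$, so $d_{\mathrm{state}} = |M| d_x$ and $d_{\mathrm{control}} = |M| d_u$. The driving noise of~\eqref{eq:mf-bar} has covariance $\bar\Sigma = \bar\sigma^2 I$ with $\bar\sigma^2 = \sigma_w^2/n + \sigma_v^2 + \sigma_{v^0}^2$ (this is where the $1/n$ averaging of the local noise enters, and is the source of the improved scaling). Substituting, the generic bound becomes $\bar R(T) \le \tilde{\mathcal O}\bigl(\bar\sigma^2 (|M| d_x)^{0.5}(|M| d_x + |M| d_u)\sqrt T\bigr) = \tilde{\mathcal O}\bigl(\bar\sigma^2 |M|^{1.5} d_x^{0.5}(d_x + d_u)\sqrt T\bigr)$, which is exactly~\eqref{eq:bar_regret}.

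I would be careful about two points. First, the \texttt{TSDE} regret bound of \cite{ouyang2019posterior} is usually stated for a fixed noise covariance and one should confirm that its dependence on the covariance is linear in the scalar $\bar\sigma^2$ (this follows because the per-step cost under the optimal/sampled gain is a trace against $\bar\Sigma$, and $\bar J(\bar\theta) = \bar\sigma^2 \TR(\bMAT S(\bar\theta))$ scales linearly in $\bar\sigma^2$, as does the value-function-difference telescoping term that dominates the regret). Second, one must confirm that the compactness of $\bar\Theta$ together with (A5) is enough to control the finitely many constants (bounds on $\|\bMAT S(\bar\theta)\|$, $\|\bMAT L(\bar\theta)\|$, moment bounds on $\BAR z_t$) that the $\tilde{\mathcal O}$ hides; these are standard consequences of continuity of the DARE solution on the compact stabilizable set, uniformly over $\bar\Theta$. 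I expect the main (minor) obstacle to be precisely this last verification — making sure that all hidden constants depend only on the fixed data $(\bbMAT Q, \bbMAT R, \bar\Theta, \delta)$ and at worst polylogarithmically on $T$, and in particular do \emph{not} secretly reintroduce a polynomial dependence on $n$ or on $|M|$ beyond what is displayed — but since the mean-field system is just an ordinary LQ system of dimension $|M|(d_x + d_u)$, this is exactly the situation covered by the cited analysis and no new idea is needed.
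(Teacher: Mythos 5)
Your proposal matches the paper's own argument: the paper likewise observes that the mean-field actor is exactly running \texttt{TSDE} on the mean-field system~\eqref{eq:mf-dynamics} with cost $\bar c$, that the posterior $\bar p_t$ depends only on $(\BAR x_{1:t}, \BAR u_{1:t-1})$, and then invokes the regret bound of \cite{ouyang2019posterior} with state dimension $|M|d_x$, control dimension $|M|d_u$, and noise scale $\bar\sigma^2$. Your additional bookkeeping about the noise-variance scaling and the compactness/(A5) constants is consistent with, and slightly more explicit than, what the paper states.
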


Unfortunately, we cannot use the same argument to bound $\breve R^{i,m}(T)$. Even
though we can view $\breve u^i_t = \bMAT L^m(\breve \theta^m) \breve x^i_t$ as
the optimal control action of the LQ system with
dynamics~\eqref{eq:rel-dynamics},
the posterior $\breve p^m_t$ on $\breve \theta^m$ depends on terms other than
$(\breve x^i_{1:t}, \breve u^i_{1:t-1})$. Therefore, we cannot directly use
the results of \cite{ouyang2019posterior} to bound $\breve R^{i,m}(T)$. In the
rest of this section, we present a bound on $\breve R^{i,m}(T)$.

For the ease of notation, for any episode $k$, we use $\brMAT L^m_k$ and $\brMAT
S^m_k$ to denote $\brMAT L^m(\breve \theta^m_k)$ and $\brMAT S^m(\breve \theta^m_k)$. 
Recall that the relative value function for average cost LQ problem is
$x^\TRANS \MAT S x$, where $\MAT S$ is the solution to DARE. Therefore, at any
time~$t$, episode $k$, agent~$i$ of type~$m$, and state $\breve x^i_t \in \reals^{d_x}$,
with $\breve u^i_t = \brMAT L^m_k \breve x^i_t$ and $\breve z^i_t =
\VEC(\breve x^i_1, \breve u^i_t)$, the average cost Bellman equation is
\begin{multline*}
  \breve J^m(\breve \theta^m_k) + (\breve x^i_t)^\TRANS \brMAT S^m_k \breve x^i_t =
  \breve c^m(\breve x^i_t, \breve u^i_t)  \\ + 
  \EXP\bigl[ 
    \bigl((\breve \theta^m_k)^\TRANS \breve z^i_t + \breve w^i_t\bigr)^\TRANS
    \brMAT S^m_k \bigl((\breve \theta^m_k)^\TRANS \breve z^i_t + \breve w^i_t\bigr) 
  \bigr].
\end{multline*}
Adding and subtracting 
\(  
  \EXP[ (\breve x^i_{t+1})^\TRANS \brMAT S^m_k \breve x^i_{t+1} \mid \breve z^i_t] 
\)
and noting that $\breve x^i_{t+1} = (\breve \theta^m)^\TRANS \breve z^i_t + \breve
w^i_t$, we get that
\begin{align}
   \breve c^m & (\breve x^i_t, \breve u^i_t) =
  \breve J^m(\breve \theta^m_k) 
   + (\breve x^i_t)^\TRANS \brMAT S^m_k \breve x^i_t 
  -
  \EXP[ (\breve x^i_{t+1})^\TRANS \brMAT S^m_k \breve x^i_{t+1} | \breve z^i_t] 
  \notag \\
&+ ( (\breve \theta^m)^\TRANS \breve z^i_t)^\TRANS \brMAT S^m_k ((\breve
\theta^m)^\TRANS
  \breve z^i_t)
  - ( (\breve \theta^m_k)^\TRANS \breve z^i_t)^\TRANS \brMAT S^m_k ((\breve
  \theta^m_k)^\TRANS \breve z^i_t).
  \label{eq:bellman_breve}
\end{align}
Let $\breve K^m_T$ denote the number of episodes of the relative systems of
type~$m$ until
the horizon $T$. For each $k > \breve K^m_T$, we define $\breve t^m_k$ to be $T+1$. Then, using~\eqref{eq:bellman_breve}, we have that for any
agent~$i$ of type~$m$,
\begin{align}
  \breve R^{i,m}&(T) =
  \underbrace{%
    \EXP\biggl[ \sum_{k=1}^{\breve K^m_T} \breve T^m_k \breve J^m(\breve \theta^m_k) - T 
    \breve J^m(\breve \theta^m) \biggr]
  }_{\text{regret due to sampling error\,} \eqqcolon \breve R^{i,m}_0(T)}
  \notag\\
  & +
  \underbrace{%
   \EXP\biggl[ 
    \sum_{k=1}^{\breve K^m_T} \sum_{t = \breve t^m_k}^{\breve t^m_{k+1} - 1}
    \bigl[ (\breve x^i_t)^\TRANS \brMAT S^m_k \breve x^i_t -
    (\breve x^i_{t+1})^\TRANS \brMAT S^m_k \breve x^i_{t+1} \bigr]
   \biggr]
   }_{\text{regret due to time-varying controller\,} \eqqcolon \breve R^{i,m}_1(T)}
  \notag \\
  & + 
  \underbrace{%
    \begin{aligned}[t]
   \EXP\biggl[ 
    \sum_{k=1}^{\breve K^m_T} \sum_{t = \breve t^m_k}^{\breve t^m_{k+1} - 1}
    \bigl[ ( (\breve \theta^m)^\TRANS \breve z^i_t)^\TRANS & \brMAT S^m_k ((\breve
      \theta^m)^\TRANS \breve z^i_t) 
      \\[-15pt] -
   &( (\breve \theta^m_k)^\TRANS \breve z^i_t)^\TRANS \brMAT S^m_k ((\breve
 \theta^m_k)^\TRANS \breve z^i_t) \bigr]
  \biggr].
   \end{aligned}}_{\text{regret due to model mismatch\,} \eqqcolon \breve R^{i,m}_2(T)}
   \label{eq:regret-components}
\end{align}

\begin{lemma}\label{lem:breve_regret_terms}
  The terms in~\eqref{eq:regret-components} are bounded as follows:
  \begin{enumerate}[leftmargin=1em, font=\textup]
    \item $\breve R^{i,m}_0(T) \le 
      \tilde{\mathcal{O}} (\breve \sigma^2 \sqrt{(d_x + d_u) T})$.
    \item $\breve R^{i,m}_1(T) \le 
      \tilde{\mathcal{O}} (\breve \sigma^2 \sqrt{(d_x + d_u) T})$.
    \item $\breve R^{i,m}_2(T) \le 
      \tilde{\mathcal{O}} (\breve \sigma^2 (d_x + d_u) \sqrt{d_x T})$.
  \end{enumerate}
\end{lemma}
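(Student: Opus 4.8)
The plan is to bound each of the three terms $\breve R^{i,m}_0(T)$, $\breve R^{i,m}_1(T)$, $\breve R^{i,m}_2(T)$ separately, reusing as much of the \texttt{TSDE} machinery of~\cite{ouyang2019posterior} as possible, but being careful about the fact that the posterior $\breve p^m_t$ is updated from the data $\{\breve x^{j^m_s}_s, \breve u^{j^m_s}_s, \breve x^{j^m_s}_{s+1}\}$ of the \emph{selected} agents, not from agent~$i$'s own trajectory. First, for the sampling-error term $\breve R^{i,m}_0(T)$: the key observation is that, conditioned on the history up to $\breve t^m_k$, the sampled parameter $\breve\theta^m_k$ and the true parameter $\breve\theta^m$ have the same distribution, and $\breve T^m_k$ is measurable with respect to that history (the episode length is determined by the stopping rule, which depends only on $\breve\Sigma^m_\cdot$, hence only on past $\breve z$'s). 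So $\EXP[\breve T^m_k \breve J^m(\breve\theta^m_k)] = \EXP[\breve T^m_k \breve J^m(\breve\theta^m)]$ and the term collapses to $\EXP\big[\breve J^m(\breve\theta^m)(\sum_k \breve T^m_k - T)\big]$, which is $\le \tilde{\mathcal O}(\breve\sigma^2)\,\EXP[\breve K^m_T]$ because $\breve J^m(\breve\theta^m)=\breve\sigma^2\TR(\brMAT S^m)$ is bounded (compact $\breve\Theta^m$, continuity of DARE). I would then bound $\EXP[\breve K^m_T] = \tilde{\mathcal O}(\sqrt{(d_x+d_u)T})$ via the standard two-part argument in~\cite{ouyang2019posterior}: the number of episodes triggered by the ``doubling of time'' rule is $\mathcal O(\sqrt{K^m_T T})$, and the number triggered by the determinant rule is $\mathcal O((d_x+d_u)\log T)$ because $\log\det(\breve\Sigma^m_t)^{-1}$ increases by a controlled amount each step and its total increase is $\mathcal O((d_x+d_u)\log T)$; solving gives $\EXP[\breve K^m_T]=\tilde{\mathcal O}(\sqrt{(d_x+d_u)T})$. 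Crucially, this step is about the selected-agent data and therefore goes through verbatim.

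For the time-varying-controller term $\breve R^{i,m}_1(T)$: the inner sum telescopes within each episode to $(\breve x^i_{\breve t^m_k})^\TRANS \brMAT S^m_k \breve x^i_{\breve t^m_k} - (\breve x^i_{\breve t^m_{k+1}})^\TRANS \brMAT S^m_k \breve x^i_{\breve t^m_{k+1}}$, and summing by parts over $k$ leaves a sum of terms of the form $(\breve x^i_{\breve t^m_{k+1}})^\TRANS(\brMAT S^m_{k+1}-\brMAT S^m_k)\breve x^i_{\breve t^m_{k+1}}$ plus boundary terms. Bounding $\|\brMAT S^m_k\|\le \tilde{\mathcal O}(1)$ uniformly, this is $\le \tilde{\mathcal O}(\breve\sigma^2)\,\EXP[\breve K^m_T \cdot \max_t \|\breve x^i_t\|^2/\breve\sigma^2]$; stability from (A5) gives $\EXP[\|\breve x^i_t\|^2] = \tilde{\mathcal O}(\breve\sigma^2 d_x/(1-\delta^2))$ uniformly in $t$ (and a sub-Gaussian tail argument controls $\max_t$ up to a $\log T$ factor), so this term is $\tilde{\mathcal O}(\breve\sigma^2 \EXP[\breve K^m_T]) = \tilde{\mathcal O}(\breve\sigma^2\sqrt{(d_x+d_u)T})$. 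This is essentially identical to the corresponding step in~\cite{ouyang2019posterior}, since it only uses the dynamics~\eqref{eq:rel-dynamics} of agent~$i$ together with the stability assumption, not the posterior-update rule.

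The model-mismatch term $\breve R^{i,m}_2(T)$ is where the analysis genuinely departs from~\cite{ouyang2019posterior}, and I expect it to be the main obstacle. Writing the summand as $(\breve z^i_t)^\TRANS\big(\breve\theta^m \brMAT S^m_k (\breve\theta^m)^\TRANS - \breve\theta^m_k \brMAT S^m_k (\breve\theta^m_k)^\TRANS\big)\breve z^i_t$ and using $\|\brMAT S^m_k\|=\tilde{\mathcal O}(1)$, one factors out $\|\breve\theta^m - \breve\theta^m_k\|$ times $\|\breve z^i_t\|^2$; the issue is that the natural bound on $\|\breve\theta^m-\breve\theta^m_k\|$ is in terms of the weighted norm $\|\,\cdot\,\|_{(\breve\Sigma^m_{\breve t^m_k})^{-1}}$, and the weighting matrix is built from the \emph{selected agents'} regressors $\breve z^{j^m_s}_s$, whereas the error term I must control involves agent~$i$'s regressors $\breve z^i_t$. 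This is precisely why the selection rule $j^m_{t-1}=\arg\max_{i\in N^m}(\breve z^i_{t-1})^\TRANS\breve\Sigma^m_{t-1}\breve z^i_{t-1}$ is used: for the chosen agent, $(\breve z^{j^m_s}_s)^\TRANS\breve\Sigma^m_s \breve z^{j^m_s}_s \ge (\breve z^i_s)^\TRANS\breve\Sigma^m_s\breve z^i_s$ for \emph{every} $i\in N^m$, so the log-det potential of the selected-agent filtration dominates, up to the $\max$, the per-step increments driven by \emph{any} agent's $\breve z^i_t$. Concretely, I would (i) use the standard TS identity $\EXP[f(\breve\theta^m_k)\mid \mathcal{F}_{\breve t^m_k}]=\EXP[f(\breve\theta^m)\mid\mathcal{F}_{\breve t^m_k}]$ to replace $\breve\theta^m$ by a second independent posterior sample, (ii) bound each summand by $\tilde{\mathcal O}(1)\cdot\|\breve z^i_t\|^2_{(\breve\Sigma^m_{\breve t^m_k})^{-1}}\cdot(\|\breve\theta^m-\breve\mu^m_{\breve t^m_k}\|^2_{(\breve\Sigma^m_{\breve t^m_k})^{-1}}+\|\breve\theta^m_k-\breve\mu^m_{\breve t^m_k}\|^2_{(\breve\Sigma^m_{\breve t^m_k})^{-1}})$ after a Cauchy–Schwarz in the $(\breve\Sigma^m_{\breve t^m_k})^{-1}$ inner product, (iii) show $\|\breve z^i_t\|^2_{(\breve\Sigma^m_{\breve t^m_k})^{-1}}\le \|\breve z^i_t\|^2_{(\breve\Sigma^m_t)^{-1}} \cdot (\det\breve\Sigma^m_{\breve t^m_k}/\det\breve\Sigma^m_t) \le 2\|\breve z^i_t\|^2_{(\breve\Sigma^m_t)^{-1}}$ by the determinant-halving stopping rule, (iv) invoke the selection rule to get $\|\breve z^i_t\|^2_{(\breve\Sigma^m_t)^{-1}}\le \|\breve z^{j^m_t}_t\|^2_{(\breve\Sigma^m_t)^{-1}}$, and finally (v) apply the elliptical-potential (log-det) lemma: $\sum_{t=1}^T \min\{1,\|\breve z^{j^m_t}_t\|^2_{(\breve\Sigma^m_t)^{-1}}\}\le 2(d_x+d_u)\log\det(\breve\Sigma^m_1^{-1}\breve\Sigma^m_{T+1})=\tilde{\mathcal O}(d_x+d_u)$, while the posterior-concentration factor contributes $\EXP[\|\breve\theta^m-\breve\mu^m_{\breve t^m_k}\|^2_{(\breve\Sigma^m_{\breve t^m_k})^{-1}}]=\tilde{\mathcal O}(d_x)$ per column and hence $\tilde{\mathcal O}(d_x^2)$ over all $d_x$ columns — combining $\sqrt{(d_x)^2}$ from concentration against $(d_x+d_u)$ potential-growth over a horizon giving $\sqrt{T}$ via the standard Cauchy–Schwarz-in-$t$ split yields the claimed $\tilde{\mathcal O}(\breve\sigma^2(d_x+d_u)\sqrt{d_x T})$. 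I would double-check the $\breve\sigma^2$ bookkeeping: it enters because the posterior covariance $\breve\Sigma^m_t$ scales like $\breve\sigma^2(\sum \breve z\breve z^\TRANS)^{-1}$ and $\EXP\|\breve z^i_t\|^2$ scales like $\breve\sigma^2$, so the two $\breve\sigma^2$'s combine with one inverse $\breve\sigma^2$ from the covariance to leave a single net $\breve\sigma^2$ prefactor, matching the statement.
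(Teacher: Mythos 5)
Your overall architecture coincides with the paper's: you bound $\breve R^{i,m}_0$ via the Thompson-sampling property plus a bound on the episode count $\breve K^m_T$, bound $\breve R^{i,m}_1$ by telescoping within episodes and controlling $\EXP[\breve K^m_T(\breve X^i_T)^2]$ through the stability assumption (A5), and bound $\breve R^{i,m}_2$ by a Cauchy--Schwarz factorization into a posterior-concentration factor and a weighted-regressor potential, using the selection rule to dominate $(\breve z^i_t)^\TRANS\breve\Sigma^m_t\breve z^i_t$ by $(\breve z^{j^m_t}_t)^\TRANS\breve\Sigma^m_t\breve z^{j^m_t}_t$ and then invoking the log-det (elliptical potential) lemma for the selected agent's sequence. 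This is exactly the paper's route (the paper invokes Lemma~10 of \cite{ouyang2019posterior} for the concentration factor rather than introducing a second posterior sample, but that is a cosmetic difference).

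There is, however, one genuine flaw in your argument for $\breve R^{i,m}_0(T)$: you claim $\breve T^m_k$ is measurable with respect to the history at the sampling time $\breve t^m_k$ and conclude $\EXP[\breve T^m_k\,\breve J^m(\breve\theta^m_k)]=\EXP[\breve T^m_k\,\breve J^m(\breve\theta^m)]$. This is not true. The episode length is determined by the stopping rule evaluated at times \emph{inside} episode~$k$, and the covariances $\breve\Sigma^m_t$ for $t>\breve t^m_k$ are built from regressors $\breve z^{j^m_s}_s$ generated under the closed loop driven by $\brMAT L^m(\breve\theta^m_k)$; hence $\breve T^m_k$ depends on the sampled parameter and cannot be pulled out using the TS identity. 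The paper handles precisely this point by using the first stopping criterion $\breve T^m_k\le\breve T^m_{k-1}+1$ together with $\breve J^m\ge 0$: the weight $\mathds{1}_{\{\breve t^m_k\le T\}}(\breve T^m_{k-1}+1)$ \emph{is} measurable with respect to the selected-agent data up to $\breve t^m_k$, and only then is the conditional-distribution identity applied, giving $\breve R^{i,m}_0(T)\le\breve M_J\breve\sigma^2\EXP[\breve K^m_T]$. Your term-0 bound needs this substitution to go through. Two smaller slips in your term-2 argument: the Cauchy--Schwarz split must weight the regressor by $\breve\Sigma^m$ and the parameter error by $(\breve\Sigma^m)^{-1}$ (you write both with the inverse), and the selection rule maximizes $(\breve z^i_t)^\TRANS\breve\Sigma^m_t\breve z^i_t$, so your step~(iv) is only valid in the $\breve\Sigma^m_t$-weighted norm; also the concentration factor is $O(d_x+d_u)$ per column, i.e.\ $O(d_x(d_x+d_u))$ over the $d_x$ columns, which is what actually produces the $(d_x+d_u)\sqrt{d_xT}$ rate --- your $d_x^2$ count does not literally yield the stated bound.
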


\begin{figure*}[!htb]
    \centering
    \begin{subfigure}[t]{0.32\linewidth}
      \centering
      \includegraphics[width=\textwidth]{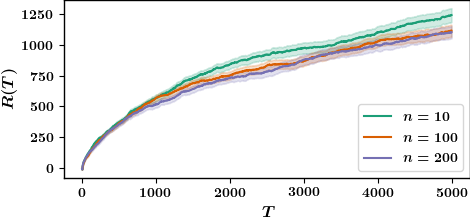}
      \caption{$R(T)$ vs $T$ for \texttt{TSDE-MF}}
      \label{fig:regreta}
    \end{subfigure}%
    \hfill
    \begin{subfigure}[t]{0.32\linewidth}
      \centering
      \includegraphics[width=\textwidth]{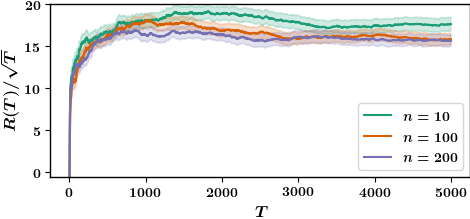}
      \caption{$R(T)/\sqrt{T}$ vs $T$ for \texttt{TSDE-MF}}
      \label{fig:regretb}
    \end{subfigure}%
    \hfill
    \begin{subfigure}[t]{0.32\linewidth}
      \centering
      \includegraphics[width=\textwidth]{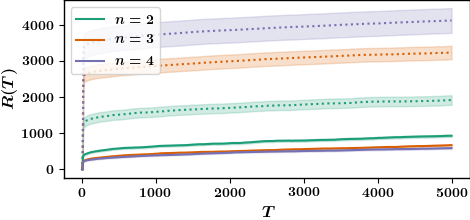}
      \caption{\texttt{TSDE-MF} (solid) vs \texttt{TSDE} (dashed)}
      \label{fig:regretc}
    \end{subfigure}%
    \caption{Expected regret vs time.} 
\end{figure*}

\begin{proof}
  We provide an outline of the proof. See the supplementary file for complete
  details.

  The first term $\breve R^{i,m}_0(T)$ can be bounded using the basic property of
  Thompson sampling: for any measurable function $f$, $\EXP[ f(\breve
  \theta^m_k) ] = \EXP[ f(\breve \theta^m)]$ because $\breve \theta^m_k$ is a sample
  from the posterior distribution on $\breve \theta^m$.  

  Note that the second term $\breve R^{i,m}_1(T)$ is a telescopic sum, which we
  can simplify to establish
  \[
    \breve R^{i,m}_1(T) \le \mathcal{O}\bigl( \EXP[ \breve K^m_T (\breve X^i_T)^2 ]\bigr),
  \]
  where $\breve X^i_t = \max_{1 \le t \le T} \| \breve x^i_t \|$ is the
  maximum norm of the relative state along the entire trajectory. The final
  bound on $\breve R^{i,m}_1(T)$ can be obtained by bounding $\breve K^m_T$
  and $\EXP[ (\breve X^i_T)^2 ]$.

  Using the sampling condition for $\breve p^m_t$ and an existing bound in the
  literature, we first establish that 
  \[
    \breve{R}_2^i(T) \leq  \sqrt{\textstyle \EXP \big[(\breve{X}_T^i)^2
    \sum_{t=1}^T (\breve{z}_t^i)^\intercal \breve{\Sigma}^m_t \breve{z}_t^i \big]}  \times \tilde{\mathcal{O}}(\sqrt{T})
\]
Then, we upper bound $(\breve{z}_t^i)^\intercal \breve{\Sigma}^m_t
\breve{z}_t^i$ by $(\breve{z}_t^{j^m_t})^\intercal \breve{\Sigma}^m_t
\breve{z}_t^{j^m_t}$ which follows from the definition of $j^m_t$. Finally, we
show that $ \EXP[(\breve{X}_T^i)^2 \sum_t (\breve{z}_t^{j^m_t})^\intercal
\breve{\Sigma}^m_t \breve{z}_t^{j^m_t}]$ is $\tilde{\mathcal{O}}(1)$ using the
fact that $(\breve{\Sigma}^m_t)^{-1}$ is obtained by linearly combining $\bigl\{
\breve z_s^{j^m_s}\bigl(\breve z_s^{j^m_s}\bigr)^\TRANS \bigr\}_{1\leq s <t}$ as in \eqref{eq:sigma_breve_update}. 
\end{proof}

Combining the three bounds in Lemma~\ref{lem:breve_regret_terms}, we get that
\begin{equation}\label{eq:breve_regret}
  \breve R^{i,m}(T) \le \tilde{\mathcal{O}}(\breve \sigma^2 d_x^{0.5}(d_x +
  d_u) \sqrt{T}).
\end{equation}
By subsituting~\eqref{eq:bar_regret} and~\eqref{eq:breve_regret}
in~\eqref{eq:regret-split}, we get the result of Theorem~\ref{thm:main}.

\section{Numerical Experiments}\label{sec:simulation}
In this section, we illustrate the performance of \texttt{TSDE-MF} for a
homogeneous (i.e., $|M| = 1$) mean-field LQ system for different values of
the number $n$ of agents. See supplementary file for the choice of parameters.

\paragraph{Empirical evaluation of regret:}

We run the system for $500$ different sample paths and plot the mean and
standard deviation of the expected regret $R(T)$ for $T = 5000$. The regret
for different values of $n$ is shown in~\ref{fig:regreta}--\ref{fig:regretb}. As seen from the
plots, the regret reduces with the number of agents and $R(T)/\sqrt{T}$
converges to a constant. Thus, the empirical regret matches the upper bound of
$\tilde{\mathcal{O}}( (1 + \frac1n)\sqrt{T})$ obtained in Theorem~\ref{thm:main}. 

\paragraph{Comparison with naive TSDE algorithm:}
We compare the performance of \texttt{TSDE-MF} with that of directly using the \texttt{TSDE}
algorithm presented in \cite{ouyang2017control,ouyang2019posterior} for
different values of $n$. The results are shown in Fig.~\ref{fig:regretc}. As
seen from the plots, the regret of \texttt{TSDE-MF} is smaller than
\texttt{TSDE} but more importantly, the regret of \texttt{TSDE-MF} reduces
with $n$ while that of \texttt{TSDE} increases with $n$. This matches their
respective upper bounds of $\tilde{\mathcal{O}}( (1 + \frac1n) \sqrt{T})$ and
$\tilde{\mathcal{O}}(n^{1.5} \sqrt{T})$. These plots clearly illustrate the
significance of our results even for small values of~$n$.

\section{Conclusion}
We consider the problem of controlling an unknown LQ mean-field
team. The planning solution (i.e., when the model is known) for
mean-field teams is obtained by solving the mean-field system and the relative
systems separately. 
Inspired by this feature, we
propose a TS-based learning algorithm \texttt{TSDE-MF} which separately tracks
the parameters $\bar \theta$ and $\breve \theta^m$ of the mean-field and
the relative systems, respectively. The part of the \texttt{TSDE-MF} algorithm
that learns the mean-field system is similar to the \texttt{TSDE}
algorithm for single agent LQ systems proposed in \cite{ouyang2017control,
ouyang2019posterior} and its regret can be
bounded using the results of \cite{ouyang2017control, ouyang2019posterior}.
However, the part of the \texttt{TSDE-MF} algorithm that learns the
relative component is different and we cannot directly use the results of
\cite{ouyang2017control, ouyang2019posterior} to bound its regret. Our main
technical contribution is to provide a bound on the regret on the relative
system, which allows us to bound the total regret under \texttt{TSDE-MF}. 

\paragraph{Distributed implementation of the algorithm:}
It is possible to implement Algorithm~\ref{alg:tsde_mf} in a distributed
manner as follows. Instead of a centralized coordinator which collects all the
observations and computes all the controls, we can consider an alternative
implementation in which there is an actor $\mathcal A^m$ associated with
type~$m$ and a mean-field actor $\bar{\mathcal{A}}$. Each agent observes its
local state and action. The actor $\mathcal{A}^m$ for type~$m$ computes
$(j^m_t, \bar x^m_t)$ using a distributed algorithm, sends $\bar x^m_t$ to the
mean-field actor, and locally computes $\brMAT L^m(\breve \theta_k)$. The
mean-field actor computes $\bMAT L(\bar \theta_k)$ and sends the $m$-th block
column $\bMAT L^m(\bar \theta_k)$ to actors $\mathcal{A}^m$. Each actor
$\mathcal{A}^m$ then sends $(\bar x^m_t, \bMAT L^m(\bar \theta_k), \brMAT
L^m(\breve \theta_k))$ to each agent of type~$m$ using a distributed
algorithm. Each agent then applies the control law~\eqref{eq:optimal}.


\acknowledgments{The work of AM was supported in part by  the Innovation for
Defence Excellence and Security (IDEaS) Program of the Canadian Department of
National Defence through grant CFPMN2-30.}

\clearpage

\bibliographystyle{myabbrvnat}
\bibliography{IEEEabrv,ref_learning}

\begin{thebibliography}{36}
\providecommand{\natexlab}[1]{#1}
\providecommand{\url}[1]{\texttt{#1}}
\expandafter\ifx\csname urlstyle\endcsname\relax
  \providecommand{\doi}[1]{doi: #1}\else
  \providecommand{\doi}{doi: \begingroup \urlstyle{rm}\Url}\fi

\bibitem[Auer et~al.(2002)Auer, Cesa-Bianchi, and Fischer]{auer2002finite}
P.~Auer, N.~Cesa-Bianchi, and P.~Fischer.
\newblock Finite-time analysis of the multiarmed bandit problem.
\newblock \emph{Machine learning}, 47\penalty0 (2-3):\penalty0 235--256, 2002.

\bibitem[Campi and Kumar(1998)]{campi1998adaptive}
M.~C. Campi and P.~Kumar.
\newblock Adaptive linear quadratic gaussian control: the cost-biased approach
  revisited.
\newblock \emph{SIAM Journal on Control and Optimization}, 36\penalty0
  (6):\penalty0 1890--1907, 1998.

\bibitem[Abbasi-Yadkori and Szepesv{\'a}ri(2011)]{abbasi2011regret}
Y.~Abbasi-Yadkori and C.~Szepesv{\'a}ri.
\newblock Regret bounds for the adaptive control of linear quadratic systems.
\newblock In \emph{Proceedings of the 24th Annual Conference on Learning
  Theory}, pages 1--26, 2011.

\bibitem[Faradonbeh et~al.(2017)Faradonbeh, Tewari, and
  Michailidis]{faradonbeh2017finite}
M.~K.~S. Faradonbeh, A.~Tewari, and G.~Michailidis.
\newblock Finite time analysis of optimal adaptive policies for
  linear-quadratic systems.
\newblock \emph{arXiv preprint arXiv:1711.07230}, 2017.

\bibitem[Cohen et~al.(2019)Cohen, Koren, and Mansour]{cohen2019learning}
A.~Cohen, T.~Koren, and Y.~Mansour.
\newblock Learning linear-quadratic regulators efficiently with only $\sqrt{T}$
  regret.
\newblock \emph{arXiv preprint arXiv:1902.06223}, 2019.

\bibitem[Abeille and Lazaric(2020)]{abeille2020efficient}
M.~Abeille and A.~Lazaric.
\newblock Efficient optimistic exploration in linear-quadratic regulators via
  lagrangian relaxation.
\newblock \emph{arXiv preprint arXiv:2007.06482}, 2020.

\bibitem[Astrom and Wittenmark(1994)]{astrom1994adaptive}
K.~J. Astrom and B.~Wittenmark.
\newblock \emph{Adaptive Control}.
\newblock Addison-Wesley Longman Publishing Co., Inc., 1994.

\bibitem[Dean et~al.(2018)Dean, Mania, Matni, Recht, and Tu]{dean2018regret}
S.~Dean, H.~Mania, N.~Matni, B.~Recht, and S.~Tu.
\newblock Regret bounds for robust adaptive control of the linear quadratic
  regulator.
\newblock In \emph{Advances in Neural Information Processing Systems}, pages
  4192--4201, 2018.

\bibitem[Mania et~al.(2019)Mania, Tu, and Recht]{mania2019certainty}
H.~Mania, S.~Tu, and B.~Recht.
\newblock Certainty equivalent control of {LQR} is efficient.
\newblock \emph{preprint arXiv:1902.07826}, 2019.

\bibitem[Faradonbeh et~al.(2020)Faradonbeh, Tewari, and
  Michailidis]{faradonbeh2020input}
M.~K.~S. Faradonbeh, A.~Tewari, and G.~Michailidis.
\newblock Input perturbations for adaptive control and learning.
\newblock \emph{Automatica}, 117:\penalty0 108950, 2020.

\bibitem[Simchowitz and Foster(2020)]{simchowitz2020naive}
M.~Simchowitz and D.~J. Foster.
\newblock Naive exploration is optimal for online lqr.
\newblock \emph{arXiv preprint arXiv:2001.09576}, 2020.

\bibitem[Agrawal and Goyal(2012)]{agrawal2012analysis}
S.~Agrawal and N.~Goyal.
\newblock Analysis of thompson sampling for the multi-armed bandit problem.
\newblock In \emph{Conference on Learning Theory}, 2012.

\bibitem[Ouyang et~al.(2017)Ouyang, Gagrani, and Jain]{ouyang2017control}
Y.~Ouyang, M.~Gagrani, and R.~Jain.
\newblock Control of unknown linear systems with thompson sampling.
\newblock In \emph{55th Annual Allerton Conference on Communication, Control,
  and Computing}, pages 1198--1205, 2017.

\bibitem[Ouyang et~al.(2019)Ouyang, Gagrani, and Jain]{ouyang2019posterior}
Y.~Ouyang, M.~Gagrani, and R.~Jain.
\newblock Posterior sampling-based reinforcement learning for control of
  unknown linear systems.
\newblock \emph{IEEE Transactions on Automatic Control}, 2019.

\bibitem[Abeille and Lazaric(2018)]{abeille2018improved}
M.~Abeille and A.~Lazaric.
\newblock Improved regret bounds for thompson sampling in linear quadratic
  control problems.
\newblock In \emph{International Conference on Machine Learning}, pages 1--9,
  2018.

\bibitem[Cassel et~al.(2020)Cassel, Cohen, and Koren]{cassel2020logarithmic}
A.~Cassel, A.~Cohen, and T.~Koren.
\newblock Logarithmic regret for learning linear quadratic regulators
  efficiently.
\newblock \emph{arXiv preprint arXiv:2002.08095}, 2020.

\bibitem[Lunze(1986)]{lunze1986dynamics}
J.~Lunze.
\newblock Dynamics of strongly coupled symmetric composite systems.
\newblock \emph{International Journal of Control}, 44\penalty0 (6):\penalty0
  1617--1640, 1986.

\bibitem[Sundareshan and Elbanna(1991)]{sundareshan1991qualitative}
M.~K. Sundareshan and R.~M. Elbanna.
\newblock Qualitative analysis and decentralized controller synthesis for a
  class of large-scale systems with symmetrically interconnected subsystems.
\newblock \emph{Automatica}, 27\penalty0 (2):\penalty0 383--388, 1991.

\bibitem[Yang and Zhang(1995)]{yang1995structural}
G.-H. Yang and S.-Y. Zhang.
\newblock Structural properties of large-scale systems possessing similar
  structures.
\newblock \emph{Automatica}, 31\penalty0 (7):\penalty0 1011--1017, 1995.

\bibitem[Hamilton and Broucke(2012)]{hamilton2012patterned}
S.~C. Hamilton and M.~E. Broucke.
\newblock Patterned linear systems.
\newblock \emph{Automatica}, 48\penalty0 (2):\penalty0 263--272, 2012.

\bibitem[Arabneydi and Mahajan(2015)]{arabneydi2015mft}
J.~Arabneydi and A.~Mahajan.
\newblock Team-optimal solution of finite number of mean-field coupled lqg
  subsystems.
\newblock In \emph{Proc. 54rd IEEE Conf. Decision and Control}, Kyoto, Japan,
  Dec. 2015.

\bibitem[Arabneydi and Mahajan(2016)]{arabneydi2016mft}
J.~Arabneydi and A.~Mahajan.
\newblock {Linear Quadratic Mean Field Teams: Optimal and Approximately Optimal
  Decentralized Solutions}, 2016.
\newblock arXiv:1609.00056.

\bibitem[Lasry and Lions(2007)]{LasryLions_2007}
J.-M. Lasry and P.-L. Lions.
\newblock Mean field games.
\newblock \emph{Japanese Journal of Mathematics}, 2\penalty0 (1):\penalty0
  229--260, 2007.

\bibitem[Huang et~al.(2007)Huang, Caines, and
  Malham{\'e}]{HuangCainesMalhame_2007}
M.~Huang, P.~E. Caines, and R.~P. Malham{\'e}.
\newblock Large-population cost-coupled {LQG} problems with nonuniform agents:
  individual-mass behavior and decentralized epsilon-{N}ash equilibria.
\newblock \emph{IEEE Transactions on Automatic Control}, 52\penalty0
  (9):\penalty0 1560--1571, 2007.

\bibitem[Huang et~al.(2012)Huang, Caines, and
  Malham{\'e}]{HuangCainesMalhame_2012}
M.~Huang, P.~E. Caines, and R.~P. Malham{\'e}.
\newblock Social optima in mean field {LQG} control: centralized and
  decentralized strategies.
\newblock \emph{IEEE Transactions on Automatic Control}, 57\penalty0
  (7):\penalty0 1736--1751, 2012.

\bibitem[Weintraub et~al.(2005)Weintraub, Benkard, and
  Roy]{WeintraubBenkardRoy_2005}
G.~Y. Weintraub, C.~L. Benkard, and B.~V. Roy.
\newblock {O}blivious {E}quilibrium: {A} {M}ean {F}ield {A}pproximation for
  {L}arge-{S}cale {D}ynamic {G}ames.
\newblock In \emph{Advances in Neural Information Processing Systems}, pages
  1489--1496, Dec. 2005.

\bibitem[Weintraub et~al.(2008)Weintraub, Benkard, and
  Van~Roy]{WeintraubBenkardVanRoy_2008}
G.~Y. Weintraub, C.~L. Benkard, and B.~Van~Roy.
\newblock Markov perfect industry dynamics with many firms.
\newblock \emph{Econometrica}, 76\penalty0 (6):\penalty0 1375--1411, 2008.

\bibitem[Gomes et~al.(2014)]{gomes2014mean}
D.~A. Gomes et~al.
\newblock Mean field games models—a brief survey.
\newblock \emph{Dynamic Games and Applications}, 4\penalty0 (2):\penalty0
  110--154, 2014.

\bibitem[Yang et~al.(2018)Yang, Luo, Li, Zhou, Zhang, and Wang]{yang2018}
Y.~Yang, R.~Luo, M.~Li, M.~Zhou, W.~Zhang, and J.~Wang.
\newblock Mean field multi-agent reinforcement learning.
\newblock In J.~Dy and A.~Krause, editors, \emph{Proceedings of the 35th
  International Conference on Machine Learning}, volume~80 of \emph{Proceedings
  of Machine Learning Research}, pages 5567--5576, Stockholmsmässan, Stockholm
  Sweden, 10--15 Jul 2018. PMLR.

\bibitem[Subramanian and Mahajan(2019)]{subramanian2019reinforcement}
J.~Subramanian and A.~Mahajan.
\newblock Reinforcement learning in stationary mean-field games.
\newblock In \emph{Proceedings of the 18th International Conference on
  Autonomous Agents and MultiAgent Systems}, pages 251--259, 2019.

\bibitem[Tiwari et~al.(2019)Tiwari, Ghosh, and
  Aggarwal]{tiwari2019reinforcement}
N.~Tiwari, A.~Ghosh, and V.~Aggarwal.
\newblock Reinforcement learning for mean field game.
\newblock \emph{arXiv preprint arXiv:1905.13357}, 2019.

\bibitem[Guo et~al.(2019)Guo, Hu, Xu, and Zhang]{guo2019learning}
X.~Guo, A.~Hu, R.~Xu, and J.~Zhang.
\newblock Learning mean-field games.
\newblock In \emph{Advances in Neural Information Processing Systems}, pages
  4966--4976, 2019.

\bibitem[Subramanian et~al.(2020)Subramanian, Poupart, Taylor, and
  Hegde]{subramanian2020multi}
S.~G. Subramanian, P.~Poupart, M.~E. Taylor, and N.~Hegde.
\newblock Multi type mean field reinforcement learning.
\newblock \emph{arXiv preprint arXiv:2002.02513}, 2020.

\bibitem[Zhang et~al.(2020)Zhang, Miehling, et~al.]{zhang2020reinforcement}
K.~Zhang, E.~Miehling, et~al.
\newblock Reinforcement learning in non-stationary discrete-time
  linear-quadratic mean-field games.
\newblock \emph{arXiv preprint arXiv:2009.04350}, 2020.

\bibitem[Sternby(1977)]{sternby1977consistency}
J.~Sternby.
\newblock On consistency for the method of least squares using martingale
  theory.
\newblock \emph{IEEE T. on Automatic Control}, 22\penalty0 (3):\penalty0
  346--352, 1977.

\bibitem[Abbasi-Yadkori and Szepesv{\'a}ri(2015)]{abbasi2015bayesian}
Y.~Abbasi-Yadkori and C.~Szepesv{\'a}ri.
\newblock Bayesian optimal control of smoothly parameterized systems.
\newblock In \emph{UAI}, 2015.

\end{thebibliography}

\clearpage

\onecolumn
\appendix

\section{Appendix: Regret Analysis}

\subsection{Preliminary Results}

The analysis does not depend on the type $m$ of the agent. So for simplicity, we will omit the superscript $m$ in all the proofs in the appendix.

Since $\breve{S}(\cdot)$ and $\breve{L}(\cdot)$ are continuous functions on a compact set $\breve{\Theta}$, there exist finite constants  $\breve{M}_J, \breve{M}_{\breve{\theta}}, \breve{M}_S, \breve{M}_L$ such that $\TR(\breve{S}(\breve{\theta})) \leq \breve{M}_J, \|\breve{\theta} \| \leq \breve{M}_{\breve{\theta}}, \|\breve{S}(\breve{\theta}) \| \leq \breve{M}_S$ and $\|[I, \breve{L}(\breve{\theta})^\TRANS] \| \leq \breve{M}_L$ for all $\breve{\theta} \in \breve{\Theta}$ where $\|\cdot \|$ is the induced matrix norm.

Let
$\breve X^i_T = \max_{1 \leq t \leq T} \| \breve x^i_t \|$ be the maximum norm
of the relative state along the entire trajectory. The next bound follows
from \cite{ouyang2019posterior}[Lemma 2].

\begin{lemma}\label{lem:state_bound}
  For any $q\geq 1$ and any $T$ we have

  \begin{align*}
  \EXP \Big[ (\breve{X}_T^i)^q \Big] &\leq \breve{\sigma} ^{q}
    \mathcal{O}\Big( \log(T)(1-\delta)^{-q}\Big)
  \end{align*}
  where  $\delta$ is as defined in \textup{(A5)}.
\end{lemma}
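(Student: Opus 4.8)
\emph{Proof proposal.} The plan is to combine a pathwise geometric estimate on the relative state $\breve x^i_t$ — which is where assumption~(A5) enters — with a Gaussian maximal inequality for the driving noise. This is exactly the mechanism behind \cite{ouyang2019posterior}[Lemma~2], here applied to the relative subsystem of type~$m$.

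First I would record the closed-loop recursion. During episode~$k$ of the relative actor of type~$m$ the control is $\breve u^i_t = \brMAT L^m(\breve\theta^m_k)\breve x^i_t$, so, since $(\breve\theta^m)^\TRANS = [\MAT A^m, \MAT B^m]$, the relative dynamics~\eqref{eq:rel-bar} become $\breve x^i_{t+1} = \breve G^m_k\,\breve x^i_t + \breve w^i_t$ with $\breve G^m_k := \MAT A^m + \MAT B^m\brMAT L^m(\breve\theta^m_k)$. The true parameter $\breve\theta^m$ and every sampled parameter $\breve\theta^m_k$ lie in $\breve\Theta^m$ (the posteriors are supported there), so (A5) gives $\|\breve G^m_k\| \le \delta < 1$ for \emph{every} episode~$k$. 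Hence $\|\breve x^i_{t+1}\| \le \delta\|\breve x^i_t\| + \|\breve w^i_t\|$ holds on every sample path and at every $t$, no matter where the episode boundaries fall; this uniformity over $\breve\Theta^m$ is precisely why (A5) is stated for all pairs of parameters.

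Next I would unroll the recursion: $\|\breve x^i_t\| \le \delta^{t-1}\|\breve x^i_1\| + \sum_{s=1}^{t-1}\delta^{t-1-s}\|\breve w^i_s\| \le \|\breve x^i_1\| + (1-\delta)^{-1}\max_{1\le s\le T}\|\breve w^i_s\|$, so, taking the maximum over $t\le T$, $\breve X^i_T \le \|\breve x^i_1\| + (1-\delta)^{-1}W^i_T$ with $W^i_T := \max_{1\le s\le T}\|\breve w^i_s\|$. Using $(a+b)^q\le 2^{q-1}(a^q+b^q)$ and taking expectations,
\[
  \EXP\bigl[(\breve X^i_T)^q\bigr] \le 2^{q-1}\EXP\bigl[\|\breve x^i_1\|^q\bigr] + \frac{2^{q-1}}{(1-\delta)^q}\,\EXP\bigl[(W^i_T)^q\bigr].
\]
The first term is a $T$-independent moment of the Gaussian vector $\breve x^i_1 = x^i_1 - \bar x^m_1$, of order $\breve\sigma^q$ up to a constant, so it is absorbed into the $\mathcal O$. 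For the second term I would use that the relative noises $\breve w^i_s = w^i_s - \bar w^m_s$ are i.i.d.\ in $s$, zero-mean Gaussian with covariance $(1-\tfrac1n)\sigma_w^2 I = \breve\sigma^2 I$; since $v\mapsto\|v\|$ is $1$-Lipschitz, Gaussian concentration gives $\PR(\|\breve w^i_s\| \ge \breve\sigma\sqrt{d_x}+\breve\sigma r)\le e^{-r^2/2}$, and a union bound over $s\le T$ together with tail integration yields $\EXP[(W^i_T)^q] \le \breve\sigma^q\,\mathcal O\bigl((d_x+\log T)^{q/2}\bigr)$. Combining, $\EXP[(\breve X^i_T)^q] \le \breve\sigma^q\,\mathcal O\bigl((1-\delta)^{-q}(\log T)^{q/2}\bigr)$ (with $d_x$ absorbed), which for the exponents used later — in particular $q=2$ in the bounds on $\breve R^{i,m}_1$ and $\breve R^{i,m}_2$ — is exactly the claimed $\breve\sigma^q\,\mathcal O\bigl(\log(T)(1-\delta)^{-q}\bigr)$.

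The hard part is not really hard: the two things one must be careful about are (i)~invoking $\|\breve G^m_k\|\le\delta$ uniformly over \emph{all} episodes, so that re-sampling the parameter at an episode boundary never destroys the geometric decay, and (ii)~using a Gaussian rather than a crude maximal inequality, so that passing to the horizon-wide maximum costs only a power of $\log T$ rather than a power of $T$. Alternatively, one can note that the relative closed-loop system is a single-agent LQ system driven by \texttt{TSDE}-type gains satisfying (A5), and simply quote \cite{ouyang2019posterior}[Lemma~2].
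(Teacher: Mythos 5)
Your proposal is correct and takes essentially the same route as the paper, whose entire proof is the one-line citation to \cite{ouyang2019posterior}[Lemma 2]; your argument (uniform contraction $\|\MAT A^m+\MAT B^m\brMAT L^m(\breve\theta^m_k)\|\le\delta$ across all episodes via (A5), unrolling the closed-loop recursion, and a Gaussian maximal inequality for $\max_{1\le s\le T}\|\breve w^i_s\|$) is precisely the mechanism behind that cited result, and you note the citation shortcut yourself. One caveat: as written your derivation yields $\EXP[(\breve X^i_T)^q]\le\breve\sigma^q\,\mathcal O\bigl((1-\delta)^{-q}(\log T)^{q/2}\bigr)$, which matches the stated first-power-$\log T$ bound only for $q\le 2$, whereas the paper also invokes the lemma for fourth and eighth moments (e.g.\ in Lemma~\ref{lem:breve_x_bound}); this discrepancy is only polylogarithmic and is absorbed by the $\tilde{\mathcal O}$ notation in all downstream regret bounds, so nothing in the analysis breaks.
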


The following lemma gives an almost sure upper bound on the number of episodes $\breve K_T$. 

\begin{lemma}\label{lem:episodes}
The number of episodes $\breve{K}_T$ is bounded as follows:

\begin{align*}
\breve{K}_T &\leq  \mathcal{O} \left( \sqrt{ (d_x+d_u) T\log \left( \frac{1}{\breve{\sigma}^2}  \sum_t (\breve{X}_T^{j_t})^2 \right)} \right)
\end{align*}
\end{lemma}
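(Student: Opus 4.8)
The goal is to bound the number of episodes $\breve K_T$ for a relative actor. I would follow the standard TSDE episode-counting argument, which proceeds by separately counting the episodes triggered by each of the two termination conditions.

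\medskip

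\textbf{Plan.} Let me write $\breve K_T$ as a sum over episodes and split the episodes into two groups: those that ended because the length exceeded the previous episode length (call these ``long'' episodes, triggered by $t - \breve t_k > \breve T_{k-1}$) and those that ended because the determinant of the covariance dropped by a factor of two ($\det(\breve\Sigma_t) < 0.5\det(\breve\Sigma_{\breve t_k})$). First I would handle the length-triggered episodes: the condition $\breve T_k \le \breve T_{k-1} + 1$ forces episode lengths to grow by at most one each time this trigger fires, so a standard telescoping/counting argument gives that the number of such episodes among the first $\breve K_T$ episodes is $\mathcal{O}(\sqrt{\breve K_T})$, hence $\sum_k \breve T_k = T$ together with this yields $\breve K_T = \mathcal{O}(\sqrt{T} + (\text{number of determinant-triggered episodes}))$. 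More precisely, using $T = \sum_{k=1}^{\breve K_T}\breve T_k \ge \frac12\sum_k (\breve T_{k-1}+1) $-type bounds leads to $\breve K_T \le \sqrt{2 T \cdot (1 + \#\{\text{det-triggered episodes}\})}$ roughly, so it suffices to bound the number of determinant-triggered episodes by $\mathcal{O}((d_x+d_u)\log(\cdots))$.

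\medskip

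\textbf{Bounding determinant-triggered episodes.} For each determinant-triggered episode, $\det(\breve\Sigma_{\breve t_{k+1}}) < 0.5 \det(\breve\Sigma_{\breve t_k})$, and since $\breve\Sigma_t$ is non-increasing in the PSD order (its inverse only grows, by \eqref{eq:sigma_breve_update}), the determinant is non-increasing across all episodes. Hence if there are $n_{\det}$ such episodes then $\det(\breve\Sigma_{T+1}) \le 2^{-n_{\det}} \det(\breve\Sigma_1)$, i.e. $n_{\det} \le \log_2\bigl(\det(\breve\Sigma_1)/\det(\breve\Sigma_{T+1})\bigr)$. Now I would lower-bound $\det(\breve\Sigma_{T+1})$ — equivalently upper-bound $\det(\breve\Sigma_{T+1}^{-1})$. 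From the recursion, $\breve\Sigma_{T+1}^{-1} = \breve\Sigma_1^{-1} + \frac{1}{\breve\sigma^2}\sum_{t=1}^{T} \breve z^{j_t}_t (\breve z^{j_t}_t)^\TRANS$, a $(d_x+d_u)\times(d_x+d_u)$ matrix whose trace is at most $\TR(\breve\Sigma_1^{-1}) + \frac{1}{\breve\sigma^2}\sum_t \|\breve z^{j_t}_t\|^2$. Since $\breve z^{j_t}_t = \VEC(\breve x^{j_t}_t, \breve u^{j_t}_t)$ and $\breve u^{j_t}_t = \brMAT L \breve x^{j_t}_t$, we have $\|\breve z^{j_t}_t\|^2 \le \breve M_L^2 \|\breve x^{j_t}_t\|^2 \le \breve M_L^2 (\breve X^{j_t}_T)^2$. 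By the AM–GM inequality on eigenvalues, $\det(\breve\Sigma_{T+1}^{-1}) \le \bigl(\TR(\breve\Sigma_{T+1}^{-1})/(d_x+d_u)\bigr)^{d_x+d_u}$, which gives $n_{\det} = \mathcal{O}\bigl((d_x+d_u)\log\bigl(\frac{1}{\breve\sigma^2}\sum_t (\breve X^{j_t}_T)^2\bigr)\bigr)$ after absorbing the constants $\TR(\breve\Sigma_1^{-1})$, $\breve M_L$, $d_x+d_u$ into the $\mathcal{O}(\cdot)$ and logarithm. Combining with the $\sqrt{T}$ bound on length-triggered episodes via the telescoping argument yields $\breve K_T \le \mathcal{O}\bigl(\sqrt{(d_x+d_u) T \log\bigl(\frac{1}{\breve\sigma^2}\sum_t (\breve X^{j_t}_T)^2\bigr)}\bigr)$.

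\medskip

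\textbf{Main obstacle.} The only delicate point is making the split between the two triggers fully rigorous and getting the $\sqrt{T}$ interaction right — specifically, showing that the length-doubling-type constraint on the ``long'' episodes, combined with $\sum_k \breve T_k = T$, forces $\breve K_T^2 \lesssim T \cdot (1 + n_{\det})$. This is the standard TSDE macro-episode argument (grouping episodes into macro-episodes delimited by determinant-triggered ones, within which lengths increase by exactly one), and I would cite or reproduce it from \cite{ouyang2017control,ouyang2019posterior}; everything else is routine linear algebra (monotonicity of $\breve\Sigma_t$, trace–determinant inequality) plus the crude bound $\|\breve z^{j_t}_t\| \le \breve M_L \breve X^{j_t}_T$.
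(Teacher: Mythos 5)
Your proposal is correct and follows essentially the same route as the paper: both defer the macro-episode counting step (yielding $\breve K_T \lesssim \sqrt{T(1+n_{\det})}$) to Lemma 3 of \cite{ouyang2019posterior}, and both bound the number of determinant-triggered episodes by combining the recursion \eqref{eq:sigma_breve_update} with the trace--determinant (AM--GM) inequality and the bound $\|\breve z^{j_t}_t\| \le \breve M_L \breve X^{j_t}_T$. The only difference is cosmetic: the paper lower-bounds $\TR(\breve\Sigma_T^{-1})$ via $\det(\breve\Sigma_T^{-1}) \ge 2^{\breve\eta-1}\det(\breve\Sigma_1^{-1})$, whereas you upper-bound $\det(\breve\Sigma_{T+1}^{-1})$ by $(\TR/d)^{d}$ — the same inequality read in the opposite direction.
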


\begin{proof}
We can follow the same sketch as in proof of Lemma 3 in \cite{ouyang2019posterior}. Let $\breve{\eta} -1$ be the number of times the second stopping criterion is triggered for $\breve{p}_t$. Using the analysis in the proof of Lemma 3 in \cite{ouyang2019posterior}, we can get the following 
\begin{equation}\label{eq:macro_ep_bound}
\breve{K}_T \leq \sqrt{2 \breve{\eta} T}.
\end{equation}

Since the second stopping criterion is triggered whenever the determinant of sample covariance is halved, we have
\begin{align*}
&\det(\breve{\Sigma}^{-1}_{T}) \geq 2^{\breve{\eta}-1}\det(\breve{\Sigma}^{-1}_1)
\end{align*}

Let $d= d_x+d_u$. Since $(\frac{1}{d}\TR(\breve{\Sigma}_T^{-1}))^{d} \geq \det(\breve{\Sigma}_T^{-1})$, we have
\begin{align*}
\TR(\breve{\Sigma}_T^{-1}) & \geq d (\det(\breve{\Sigma}_T^{-1}))^{1/d}  \geq d \times 2^{(\breve{\eta}-1)/d} (\det(\breve{\Sigma}^{-1}_{1}))^{1/d} \\
&\geq d \times 2^{(\breve{\eta}-1)/d} \breve{\lambda}_{min}
\end{align*}
where $\breve{\lambda}_{min}$ is the minimum eigenvalue of $\breve{\Sigma}^{-1}_1$. 

Using \eqref{eq:sigma_breve_update} we have,
\begin{align*}
\breve{\Sigma}_{T}^{-1} = \breve{\Sigma}_1^{-1}  + \sum_{t=1}^{T-1} \frac{1}{\breve{\sigma}^2} \breve{z}_t^{j_t}  (\breve{z}_t^{j_t})^\intercal .
\end{align*}
Therefore $\TR (\breve{\Sigma}_{T}^{-1} ) = \TR(\breve{\Sigma}_1^{-1}) +  \sum_{t=1}^{T-1} \frac{1}{\breve{\sigma}^2} \TR(\breve{z}_t^{j_t}  (\breve{z}_t^{j_t})^\intercal )$. Note that $ \TR(\breve{z}_t^{j_t}  (\breve{z}_t^{j_t})^\intercal) =  \TR((\breve{z}_t^{j_t})^\intercal  \breve{z}_t^{j_t}) = \|  \breve{z}_t^{j_t}\|^2$. Thus,
\begin{align*}
d \times 2^{(\breve{\eta}-1)/d} \breve\lambda_{min} &\leq \TR(\breve{\Sigma}_1^{-1}) + \sum_{t=1}^{T-1} \frac{1}{\breve{\sigma}^2} ||\breve{z}_t^{j_t} ||^2
\end{align*}

Then,  
\begin{align*}
\breve{\eta} \leq & 1+ \frac{d}{\log2}\log\left(\frac{1}{d \breve\lambda_{min}} \left( \TR(\breve{\Sigma}_1^{-1} + \sum_{t=1}^{T-1} \frac{1}{\breve{\sigma}^2} ||\breve{z}_t^{j_t} ||^2 \right)  \right) \\
&= \mathcal{O}\left( d\log \left(\frac{1}{\breve{\sigma}^2} \sum_{t=1}^{T-1}  ||\breve{z}_t^{j_t} ||^2 \right)  \right).
\end{align*}

Note that, $||\breve{z}_t^{j_t}|| = ||[I, L(\breve{\theta})^\intercal]^\intercal \breve{x}_t^{j_t}|| \leq \breve{M}_L ||\breve{x}_t^{j_t}|| \leq \breve{M}_L \breve{X}_T^{j_t}$. Consequently,
\begin{align*}
\breve{\eta} \leq& \mathcal{O}\Big(d \log \Big(  \frac{1}{\breve{\sigma}^2} \sum_{t=1}^{T-1}  (\breve{X}_T^{j_t})^2 \Big) \Big)
\end{align*}

Therefore, combining the above inequality with \eqref{eq:macro_ep_bound} we get,
\begin{equation}
\breve{K}_T \leq \mathcal{O} \left( \sqrt{(d_x+d_u) T  \log \Big(  \frac{1}{\breve{\sigma}^2}    \sum_{t=1}^{T-1}  (\breve{X}_T^{j_t})^2 \Big) } \right)
\end{equation}

\end{proof}

\subsection{Proof of Lemma \ref{lem:breve_regret_terms}}

\begin{proof}

We will bound each part separately. 

1) Bounding $\breve{R}^i_0(T)$: From monotone convergence theorem, we have
\begin{align}
\breve{R}^i_0(T) 
= &\EXP\Big[ \sum_{k=1}^{\infty}\mathds{1}_{\{\breve t_{k}\leq T\}} \breve T_k \breve J(\breve\theta_k) \Big]- T \EXP\Big[ \breve J(\breve\theta) \Big]
\notag\\
=& \sum_{k=1}^{\infty}\EXP\Big[ \mathds{1}_{\{ \breve t_{k}\leq T\}} \breve T_k \breve J(\breve\theta_{k})\Big]  - T\EXP\Big[\breve J(\breve\theta) \Big]. \notag
\end{align}

Note that the first stopping criterion of {\tt TSDE-MF} ensures that $\breve T_k \leq \breve T_{k-1}+1$ for all $k$. Since $\breve J(\breve\theta_{k}) \geq 0$, each term in the first summation satisfies,
\begin{align*}
\EXP\Big[ \mathds{1}_{\{ \breve t_{k}\leq T\}} \breve T_k \breve J(\breve\theta_k) \Big]
\leq &\EXP\Big[ \mathds{1}_{\{\breve t_{k}\leq T\}}(\breve T_{k-1}+1) \breve J(\breve\theta_k) \Big].
\end{align*}
Note that $\mathds{1}_{\{\breve t_{k}\leq T\}}(\breve T_{k-1}+1)$ is measurable with respect to $\sigma( \{ \breve{x}_{s}^{j_s}, \breve{u}_{s}^{j_s}, \breve{x}_{s+1}^{j_s} \}_{1\leq s < \breve{t}_k } )$. Then, Lemma 4 of \cite{ouyang2019posterior} gives
\begin{align*}
\EXP\Big[ \mathds{1}_{\{\breve t_{k}\leq T\}}(\breve T_{k-1}+1) \breve J(\breve\theta_k) \Big]
= \EXP\Big[\mathds{1}_{\{\breve t_{k}\leq T\}}(\breve T_{k-1}+1) \breve J(\breve\theta) \Big].
\end{align*}

Combining the above equations, we get
\begin{align*}
\breve{R}^i_0(T)
\leq &\sum_{k=1}^{\infty}\EXP\Big[\mathds{1}_{\{\breve t_{k}\leq T\}}(\breve T_{k-1}+1) \breve J(\breve\theta) \Big] - T\EXP\Big[\breve J(\breve\theta) \Big]
\notag\\
= &\EXP\Big[\sum_{k=1}^{\breve K_T}(\breve T_{k-1}+1) \breve J(\breve\theta) \Big] - T\EXP\Big[\breve J(\breve\theta) \Big] \notag\\
= &\EXP\Big[\breve K_T \breve J(\breve\theta) \Big]
+ \EXP\Big[\Big(\sum_{k=1}^{\breve K_T}\breve T_{k-1} - T\Big)\breve J(\breve\theta) \Big] \\
\leq &  \breve M_J \breve{\sigma}^2 \EXP\Big[\breve K_T\Big] \notag
\end{align*}
where the last equality holds because $\breve J(\breve\theta) = \breve{\sigma}^2 \TR(\breve S(\breve\theta)) \leq  \breve{\sigma}^2  \breve M_J$ and $\sum_{k=1}^{\breve K_T} \breve T_{k-1} \leq T$. 

2) Bounding $\breve{R}^i_1(T)$: 

\begin{align*}
\breve{R}^i_1(T) = &\EXP\Big[\sum_{k=1}^{\breve K_T}\sum_{t=\breve t_k}^{\breve t_{k+1}-1}  \Big[ (\breve x_t^i)^\intercal \breve S(\breve\theta_k) \breve x_t^i - (\breve x_{t+1}^i)^\intercal \breve S(\breve\theta_k) \breve x_{t+1}^i\Big]\Big] \\
=&\EXP\Big[\sum_{k=1}^{\breve K_T}\Big[ (\breve x^i_{\breve t_k})^\intercal \breve S(\breve\theta_k) \breve x_{\breve t_k}^i -(\breve x_{\breve t_{k+1}}^i )^\intercal \breve S(\breve\theta_k) \breve x_{\breve t_{k+1}}^i \Big]\Big] \\
\leq &\EXP\Big[\sum_{k=1}^{\breve K_T}
(\breve x^i_{\breve t_k})^\intercal \breve S(\breve\theta_k) \breve x_{\breve t_k}^i \Big].
\end{align*}
Since $||\breve S(\breve\theta_{k})|| \leq \breve M_S$, we obtain
\begin{align*}
\breve R^i_1(T) \leq &\EXP\Big[\sum_{k=1}^{\breve K_T} \breve M_S \| \breve x_{\breve t_k}^i \|^2 \Big]
\leq \breve M_S\EXP\Big[\breve K_T (\breve X_T^i)^2 \Big].
\end{align*}

Now, from Lemma \ref{lem:episodes}, $\breve K_T \leq \mathcal{O}(\sqrt{ T \log(\frac{ \sum_{t=1}^T (\breve{X}_T^{j_t})^2}{\breve{\sigma}^2})) }$. Thus, we have $\breve{R}^i_1(T) \leq \mathcal{O} \left(\sqrt{T} \EXP\Big[ (\breve{X}_T^i)^2  \sqrt{\log \left(\frac{\sum_{t=1}^T (\breve{X}_T^{j_t})^2}{\breve{\sigma}^2} \right)} \Big] \right)$. Then, using Cauchy-Schwarz we have,

\begin{align*}
 \EXP\left[ (\breve{X}_T^i)^2  \sqrt{\log \left(\frac{\sum_{t=1}^T  (\breve{X}_T^{j_t})^2}{\breve{\sigma}^2} \right)} \right] &\leq \sqrt{\EXP \Big[(\breve{X}_T^i)^4\Big] ~~ \EXP \Big[ \log \left(\frac{ \sum_{t=1}^T (\breve{X}_T^{j_t})^2}{\breve{\sigma}^2} \right) \Big] }\\
 &\leq  \sqrt{\EXP \Big[ (\breve{X}_T^i)^4 \Big]  \log \left(\sum_{t=1}^T \frac{\EXP (\breve{X}_T^{j_t})^2}{\breve{\sigma}^2} \right) } 
\leq \tilde{\mathcal{O}}(\breve{\sigma}^2) 
\end{align*}

where the last inequality follows from Lemma \ref{lem:state_bound}. Therefore, we have  $\breve{R}^i_1(T) \leq  \tilde{\mathcal{O}} \left(\breve{\sigma}^2 ~\sqrt{T} \right)$.

3) Bounding $\breve{R}^i_2 (T)$: Each term inside the expectation of $\breve{R}^i_2$ is equal to
\begin{align*}
 \|\breve S^{0.5}(\breve\theta_k) \breve\theta^\intercal \breve z_t^i \|^2-\|\breve S^{0.5}(\breve\theta_k) \breve\theta_k^\intercal \breve z_t^i \|^2 
\leq& \left( \|\breve S^{0.5}(\breve\theta_k) \breve\theta^\intercal \breve z_t^i \| +  \|\breve S^{0.5}(\breve\theta_k) \breve\theta_k^\intercal \breve z_t^i \| \right)
||\breve S^{0.5}(\breve\theta_k) (\breve\theta- \breve\theta_k)^\intercal \breve z_t^i || \\
\leq & 2 \breve M_S \breve M_\theta \breve M_L \breve X_T^i ||(\breve\theta- \breve\theta_k )^\intercal \breve z_t^i ||
\end{align*}

since $\| \breve S^{0.5}(\breve\theta_k) \breve\phi^\intercal \breve z_t^i \| \leq \breve M_S^{0.5} \breve M_\theta \breve M_L \breve X_T^i$ for $\breve\phi = \breve\theta$ or $\breve\phi = \breve\theta_k$. Therefore,
\begin{align}
\breve R^i_2(T) \leq & 2 \breve M_S \breve M_\theta \breve M_L \EXP \Big[ \breve X_T^i \sum_{k=1}^{\breve K_T}\sum_{t=\breve t_k}^{\breve t_{k+1}-1}  \| (\breve\theta- \breve\theta_k )^\intercal \breve z_t^i \| \Big].
\label{eq:boundR2_1_breve}
\end{align}
From Cauchy-Schwarz inequality, we have
\begin{align}
\EXP\Big[\breve X_T^i \sum_{k=1}^{\breve K_T}\sum_{t=\breve t_k}^{\breve t_{k+1}-1}  \|(\breve\theta- \breve\theta_k )^\intercal \breve z_t^i \| \Big]
&= \EXP\Big[\breve X_T^i \sum_{k=1}^{\breve K_T}\sum_{t=\breve t_k}^{\breve t_{k+1}-1}  \| (\breve\Sigma^{-0.5}_t(\breve\theta-\breve\theta_k ))^\intercal  \breve\Sigma^{0.5}_t \breve z_t^i \| \Big]
\notag\\
&\leq \EXP\Big[\sum_{k=1}^{\breve K_T}\sum_{t=\breve t_k}^{\breve t_{k+1}-1}  \| \breve\Sigma^{-0.5}_t (\breve\theta- \breve\theta_k )\| \times  \breve X_T^i \|\breve\Sigma^{0.5}_t \breve z_t^i \| \Big]
\notag\\
&\leq 
\sqrt{\EXP\Big[\sum_{k=1}^{\breve K_T}\sum_{t=\breve t_k}^{\breve t_{k+1}-1}  \| \breve\Sigma^{-0.5}_t (\breve\theta- \breve\theta_k ) \|^2 \Big]}
\sqrt{\EXP\Big[\sum_{k=1}^{\breve K_T}\sum_{t=\breve t_k}^{\breve t_{k+1}-1}  (\breve X_T^i)^2 \| \breve\Sigma^{0.5}_t \breve z_t^i \|^2\Big]}
\label{eq:boundR2_2_breve}
\end{align}

From Lemma 10 in \cite{ouyang2019posterior}, the first part of \eqref{eq:boundR2_2_breve} is bounded by
\begin{align}
&\EXP\Big[\sum_{k=1}^{\breve K_T}\sum_{t=\breve t_k}^{\breve t_{k+1}-1}  \| \breve\Sigma^{-0.5}_t (\breve\theta-\breve\theta_k ) \|^2\Big]
\leq  4d_x d (T + \EXP[\breve K_T]).
\label{eq:boundR2_part1_breve}
\end{align}

For the second part of the bound in \eqref{eq:boundR2_2_breve}, we note that
\begin{align}
 \sum_t  ||\breve{\Sigma}^{0.5}_t \breve{z}_t^i ||^2 &=  \sum_{t=1}^T  ( \breve{z}_t^i)^\intercal \breve{\Sigma}_t  \breve{z}_t^i \notag \\
 &\leq \sum_{t=1}^T \max \left(1,\frac{\breve{M}_L^2 (\breve{X}_T^i)^2}{\breve{\lambda}_{min}} \right) \min(1,( \breve{z}_t^i)^\intercal \breve{\Sigma}_t  \breve{z}_t^i) \notag \\
 &\leq  \sum_{t=1}^T  \left(1 + \frac{\breve{M}_L^2 (\breve{X}_T^i)^2}{\breve{\lambda}_{min}} \right) \min(1, ( \breve{z}_t^{j_t})^\intercal \breve{\Sigma}_t  \breve{z}_t^{j_t}) \label{eq:breve_r2_4_bound}
\end{align}

where the last inequality follows from the definition of $j_t$. Using Lemma 8 of  \cite{abbasi2015bayesian} we have 

\begin{align}
\sum_{t=1}^T  \min(1, ( \breve{z}_t^{j_t})^\intercal \breve{\Sigma}_t  \breve{z}_t^{j_t}) \leq 2d \log \left( \frac{\TR(\breve{\Sigma}_1^{-1}) + \breve{M}_L^2 \sum_{t=1}^T (X_T^{j_t})^2 }{d} \right) \label{eq:breve_r2_5_bound}
\end{align}

Combining \eqref{eq:breve_r2_4_bound} and \eqref{eq:breve_r2_5_bound}, we can bound the second part of \eqref{eq:boundR2_2_breve} to the following
\begin{equation}\label{eq:breve_r2_6_bound}
 \EXP\Big[\sum_t (\breve{X}_T^i)^2||\breve{\Sigma}^{0.5}_t \breve{z}_t^i ||^2\Big] \leq \mathcal{O} \Big(  \EXP\Big[ (\breve{X}_T^i)^4 \log( \sum_{t=1}^T (\breve{X}_T^{j_t})^2) \Big] +  \EXP\Big[ (\breve{X}_T^i)^2 \log( \sum_{t=1}^T (\breve{X}_T^{j_t})^2) \Big]  \Big).
\end{equation}

The bound on $\breve{R}^i_2(T)$ in Lemma \ref{lem:breve_regret_terms} then follows by combining \eqref{eq:boundR2_1_breve}-\eqref{eq:breve_r2_6_bound} with the bound on $\EXP\Big[ (\breve{X}_T^i)^q \log( \sum_{t=1}^T (\breve{X}_T^{j_t})^2) \Big] $ for $q=2,4$  in Lemma \ref{lem:breve_x_bound} in the appendix.

\end{proof}

\begin{lemma}\label{lem:breve_x_bound}
For any $q\geq 1$, we have
\begin{align}
\EXP \left[ (\breve{X}_T^i)^q \log(\sum_t (\breve{X}_T^{j_t})^2 ) \right] &\leq  \left(\breve{\sigma} \right)^q \tilde{\mathcal{O}}(1)
\end{align}
\end{lemma}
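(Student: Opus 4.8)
The plan is to bound the argument of the logarithm by a maximum of the state norms over agents, and then exploit that $\log^+$ grows more slowly than any power. Following the appendix convention of dropping the type superscript, set $\breve X_T^{\star} := \max_{j}\breve X_T^{j}$, the maximum over the $n$ agents of the (fixed) type. Since each selected agent $j_t$ is one of these agents, $\sum_{t=1}^{T}(\breve X_T^{j_t})^2 \le T(\breve X_T^{\star})^2$, and since $\log(Ty^2)\le \log T + 2\log^+ y$ for all $y\ge 0$ (with $\log^+ := \max\{\log(\cdot),0\}$) while $\breve X_T^i \le \breve X_T^{\star}$ and $q\ge 1$,
\[
  (\breve X_T^i)^q\log\!\Bigl(\textstyle\sum_{t=1}^{T}(\breve X_T^{j_t})^2\Bigr)
  \;\le\; (\log T)(\breve X_T^{\star})^q + 2(\breve X_T^{\star})^q\log^+\breve X_T^{\star}.
\]
It therefore suffices to bound the expectation of each term on the right by $\breve\sigma^q\tilde{\mathcal O}(1)$.

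For the first term, a union bound over agents together with Lemma~\ref{lem:state_bound} (exponent $q$) gives $\EXP[(\breve X_T^{\star})^q]\le \sum_j\EXP[(\breve X_T^j)^q]\le n\,\breve\sigma^q\,\mathcal O(\log(T)(1-\delta)^{-q})$; multiplying by $\log T$ and recalling that $n$ and $\delta$ are fixed problem parameters yields $\breve\sigma^q\tilde{\mathcal O}(1)$.

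For the second term, use $z^q\log^+ z \le \epsilon^{-1}z^{q+\epsilon}$ (valid for every $\epsilon>0$, $z\ge 0$, from $\log^+ z\le \epsilon^{-1}z^{\epsilon}$), the same union bound, and Lemma~\ref{lem:state_bound} with exponent $q+\epsilon\ge 1$:
\[
  \EXP\bigl[(\breve X_T^{\star})^q\log^+\breve X_T^{\star}\bigr]
  \;\le\; \tfrac1\epsilon\sum_j\EXP[(\breve X_T^j)^{q+\epsilon}]
  \;\le\; \tfrac n\epsilon\,\breve\sigma^{q+\epsilon}\,\mathcal O\!\bigl(\log(T)(1-\delta)^{-(q+\epsilon)}\bigr).
\]
Taking $\epsilon = 1/\log T$ (for $T\ge 2$) makes $\breve\sigma^{\epsilon}(1-\delta)^{-\epsilon}=O(1)$ and $1/\epsilon = \log T$, so the right-hand side is $\breve\sigma^q\,\mathcal O\!\bigl(n(\log T)^2(1-\delta)^{-q}\bigr) = \breve\sigma^q\tilde{\mathcal O}(1)$. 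Summing the two bounds proves the lemma; alternatively one may just use $\log^+ z\le z$ and skip the choice of $\epsilon$, at the cost of an extra $\breve\sigma/(1-\delta)$ in the hidden constant.

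I expect the only genuinely delicate point to be the $\log^+\breve X_T^{\star}$ term, where a priori one worries both about the maximum over the $n$ agents and about the fact that Lemma~\ref{lem:state_bound} controls only polynomial moments of $\breve X_T^j$ rather than its tails; both are harmless because $\log^+$ is dominated by any positive power, so the union bound is absorbed into a power $q+\epsilon$ with $\epsilon$ of order $1/\log T$, and all resulting $n^{\epsilon}$- and $(\log T)^{1/\epsilon}$-type factors collapse to $\tilde{\mathcal O}(1)$ once $n$, $\delta$ and $\breve\sigma$ are treated as fixed.
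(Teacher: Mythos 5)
Your pointwise manipulations are fine ($\log(Ty^2)\le \log T + 2\log^{+}y$, $z^q\log^{+}z\le \epsilon^{-1}z^{q+\epsilon}$, the choice $\epsilon=1/\log T$), but the step where you pass to $\breve X_T^{\star}=\max_j \breve X_T^j$ and control its moments by a union bound, $\EXP[(\breve X_T^{\star})^q]\le \sum_j \EXP[(\breve X_T^j)^q]$, introduces a multiplicative factor of $n$ \emph{outside} the logarithm, which you then absorb by declaring $n$ a ``fixed problem parameter.'' That is a genuine gap in the context of this paper: here $\tilde{\mathcal{O}}(\cdot)$ hides only logarithmic factors in $T$, and the whole point of the result is that the constants are independent of the number of agents. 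Lemma~\ref{lem:breve_x_bound} is invoked (for $q=2,4$) in the bound on $\breve R^{i,m}_2(T)$ for \emph{every} agent $i$, so a factor $n$ here becomes (after the square root in \eqref{eq:boundR2_2_breve}) a factor $\sqrt{n}$ in $\breve R^{i,m}_2(T)$ and hence in the averaged regret \eqref{eq:regret-split}; Theorem~\ref{thm:main}'s claim that the regret does not grow with the number of agents would no longer follow. Since Lemma~\ref{lem:state_bound} supplies only fixed-order polynomial moment bounds (no tail control), the union bound cannot be sharpened to the usual $\log n$-type price for a maximum without further work, so the $n$ cannot simply be waved away.

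The paper avoids this by never taking the maximum over agents outside the logarithm: it first applies Cauchy--Schwarz to separate $\EXP[(\breve X_T^i)^{2q}]$ (a single, fixed agent, handled by Lemma~\ref{lem:state_bound}) from $\EXP\bigl[\log^2\bigl(\max\bigl(e,\sum_t (\breve X_T^{j_t})^2\bigr)\bigr)\bigr]$, and then uses Jensen's inequality with the concavity of $\log^2$ on $[e,\infty)$ to move the expectation \emph{inside} the logarithm. After that, the adaptively selected indices $j_t$ only need to be controlled in expectation inside a $\log^2$, where even a crude bound costs only logarithmic (not polynomial) factors, giving $\breve\sigma^{q}\tilde{\mathcal{O}}(1)$ with no agent-count blow-up. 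If you want to salvage your route, you would need a maximal-type bound on $\EXP[(\breve X_T^{\star})^{q}]$ that grows at most polylogarithmically in $n$ (e.g., via moments of order growing like $\log n$, with constants tracked in $(1-\delta)$), which is substantially more delicate than the union bound you use; otherwise, keep the agent-dependent sum under the logarithm as the paper does.
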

\begin{proof}

\begin{align*}
\EXP \left[ (\breve{X}_T^i)^q \log(\sum_t (\breve{X}_T^{j_t})^2 ) \right] &\leq \EXP \left[ (\breve{X}_T^i)^q \log( \max(e, \sum_t (\breve{X}_T^{j_t})^2 ) ) \right] \\
& \leq \sqrt { \EXP [ (\breve{X}_T^i)^{2q}] ~~\EXP [\log^2 \big( \max(e, \sum_t \breve{X}_T^{j_t})^2 ) \big) ]} 
\end{align*}
where the second inequality follows from the Cauchy-Schwarz inequality. Now, $\log^2(x)$ is a concave function for $x \geq e$. Therefore, using Jensen's inequality we can write,
\begin{align*}
\EXP \log^2 \big( \max(e, \sum_t \breve{X}_T^{j_t})^2 ) \big) &\leq \log^2 ( \EXP \max(e, \sum_t \breve{X}_T^{j_t})^2 ) )\\
&\leq \log^2 \left( e + \EXP (\sum_t \breve{X}_T^{j_t})^2 ) \right) \\
&\leq \log^2 \left(e + T \mathcal{O}(\breve{\sigma}^2 ~\log T) \right) \\
&= \tilde{\mathcal{O}}(1)
\end{align*}

where we used Lemma 4 in the last inequality. Similarly, $\EXP [ (\breve{X}_T^i)^{2q}] \leq \left(\breve{\sigma} \right)^{2q} \mathcal{O}(\log T)$. Therefore, combining the above inequalities we have the following:
\begin{align*}
\EXP \left[ (\breve{X}_T^i)^q \log(\sum_t (\breve{X}_T^{j_t})^2 ) \right] &\leq \sqrt { \EXP [ (\breve{X}_T^i)^{2q} ]~~ \EXP\Big[ \log^2 \left( \max(e, \sum_t \breve{X}_T^{j_t})^2 ) \right) \Big] } \\
& \leq \breve{\sigma}^q \tilde{\mathcal{O}}(1)
\end{align*}

\end{proof}

\section{Simulation Details}

We consider homogeneous scalar system ($\abs{K}=1$) with $\MAT A=1,\MAT B=0.3,\MAT D=0.5,\MAT E=0.2, \MAT Q=1,\bar{\MAT Q}=1, \MAT R=1$,and $\bar{\MAT R}=0.5$. We set the local noise variance $\sigma_w^2=1$.  

For the regret plots in Figure \ref{fig:regreta},\ref{fig:regretb}, we set the common noise variance to $\sigma_v^2+\sigma_{v^0}^2=1$. 
The prior distribution used in the simulation are set according to \textbf{(A3)} and \textbf{(A4)} with $\breve \mu(\ell)=[1,1]$, $\bar \mu(\ell)=[1,1]$, $\bar{\Sigma}_1=I$, and $\breve{\Sigma}_1=I$, $\breve\Theta = \{\breve{\theta}: \MAT A + \MAT B  \breve {\MAT L}(\breve \theta) \leq \delta \}$, $\bar\Theta = \{\bar{\theta}: \MAT A + \MAT D + (\MAT B+\MAT E) \bar{\MAT L}(\bar \theta) \leq \delta \}$ and $\delta = 0.99$. 

In the comparison of {\tt TSDE-MF} method with {\tt TSDE} in Figure \ref{fig:regretc}, we consider the same dynamics and cost parameters as above but without common noise (i.e. $\sigma_v^2+\sigma_{v^0}^2=0$).  We set the prior distribution parameters to $\breve \mu(\ell)=[0,0]$, $\bar \mu(\ell)=[0,0]$, $\bar{\Sigma}_1=I$, and $\breve{\Sigma}_1=I$ and $\delta=2.3$ in the definition of $\bar{\Theta},\breve{\Theta}$. Note that even though $\delta=2.3$ does not satisfy \textbf{(A5)}, the results show that {\tt TSDE-MF} continues to have good performance in practice.

\section{Comparison with other agent selection schemes}
In \texttt{TSDE-MF}, we update the posterior probability $\breve p_t$ on
$\breve \theta$ using $\{ x^{i^*_s}_s, u^{i^*_s}_s, x^{i^*_s}_{s+1} \}_{1 \le
s < t}$, where $i^*_s = \arg \max_{i \in N} (\breve z^i_s)^\TRANS \breve
\Sigma_s \breve z^i_s$. This particular choice of the agent selection rule
implies that while deriving a bound on $\breve R^i_2(T)$, we can upper
bound%
\footnote{The precise argument is a bit more subtle; see proof of Lemma~\ref{lem:breve_regret_terms} for details.}
$\sum_{t=1}^T (\breve z^i_t)^\TRANS \breve \Sigma_t \breve z^i_t$ by
$\sum_{t=1}^T (\breve z^{i^*_t}_t)^\TRANS \breve \Sigma_t \breve z^{i^*_t}_t$.
This, in turn, allows us to bound the regret of $\breve R^i_2(T)$ in terms of
$\EXP[ \log^2 (\sum_{t=1}^T (X^{i^*_t}_T)^2)]$, which we show is $\tilde
O(1)$.

There are three other choices for the agent
selection rule: (i)~picking a
specific agent, or (ii)~picking an agent at random, or~(iii) using the entire
trajectory $(\VVEC {\breve x}_{1:t}, \VVEC {\breve u}_{1:t-1})$, where $\VVEC
{\breve x}_t = (\breve x^i_t)_{i \in N}$ and $\VVEC {\breve u}_t = (\breve
u^i_t)_{i \in N}$. 

\begin{figure*}[!ht]
    \centering
    \hfill
    \begin{subfigure}[t]{0.32\linewidth}
      \centering
      \includegraphics[width=\textwidth]{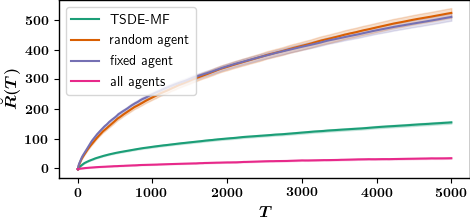}
      \caption{$\breve R(T)$ vs $T$}
    \end{subfigure}%
    \hfill
    \begin{subfigure}[t]{0.32\linewidth}
      \centering
      \includegraphics[width=\textwidth]{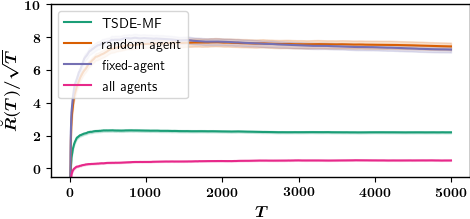}
      \caption{$\breve R(T)/\sqrt{T}$ vs $T$}
    \end{subfigure}%
    \hfill\null
    \caption{Impact of different agent selection schemes on expected regret.} 
    \label{fig:comparison}
\end{figure*}

If we follow approach~(i) and arbitrarily pick an agent, say $j$, and
update the posterior distribution $\breve p_t$ on $\breve \theta$ using
$(x^j_{1:t}, u^j_{1:t-1})$. This would mean that we can directly use the
result of \cite{ouyang2017control, ouyang2019posterior} to bound the regret of
$\breve R^j(T)$. However, we would still need to bound $\breve R^i(T)$ for $i
\neq j$. In this case, we can follow the argument similar to one presented in
the supplementary file to bound $\breve R^i_0(T)$ and $\breve R^i_1(T)$, but
the bound on $\breve R^i_2(T)$ does not work because we are not able to bound 
$\sum_{t=1}^T (\breve z^i_t)^\TRANS \breve \Sigma_t \breve z^i_t$ in terms of
an expression involving $X^j_T$. Similar limitations hold for
alternative~(ii).

We conducted numerical experiments to check if these alternatives perform better
in practice, which are presented in Fig.~\ref{fig:comparison}, where we show $\breve R(T) =
\frac{1}{n} \sum_{i=1}^n \breve R^i(T)$ for the system model analyzed in
Sec.~\ref{sec:simulation}.
For alternatives (i) and (ii), their regret orders appear to be bounded by
$\tilde{\mathcal{O}}(\sqrt{T})$, but they clearly perform worse than the
proposed method. For alternative (iii), the regret is slightly better than the
proposed method. However, implementing alternative (iii) requires complete
trajectory sharing among all agents. The extra computation and communication
cost of alternative (iii) could hinder its application to systems with a large
number of agents.

\end{document}